\theoremstyle{plain}
\newtheorem{thm}{Theorem}[section]
\newtheorem{theorem}[thm]{Theorem}
\newtheorem{lemma}[thm]{Lemma}
\theoremstyle{definition}
\newtheorem{remark}[thm]{Remark}
\newtheorem{definition}[thm]{Definition}
\numberwithin{equation}{section}
\newcommand{\Log}{\mbox{{\sf L}}}
\newcommand{\ULog}{\mbox{{\sf UL}}}
\newcommand{\NLog}{\mbox{{\sf NL}}}
\newcommand{\Ptime}{\mbox{{\sf P}}}
\begin{document}

\title[Tractable Win-Lose Games]{Some Tractable Win-Lose Games}

\keywords{Win-lose bimatrix game, $K_{3,3}$-minor-free, $K_{5}$-minor-free, Nash equilibrium}

\author[S. Datta]{}
\author[N. Krishnamurthy]{}

\maketitle

\begin{center}
$\begin{array}{cc}{\mbox{\sc {Samir Datta}}} & {\mbox{\sc {Nagarajan Krishnamurthy}}} \\ {sdatta@cmi.ac.in}  & {naga@cmi.ac.in} \\ & \\ \end{array}$
\end{center}

\begin{center}
$\begin{array}{c}
\mbox{Chennai Mathematical Institute}\\
\mbox{India.}\\ 
\\  
\end{array}$
\end{center}


\begin{abstract}
 \noindent Determining a Nash equilibrium in a $2$-player non-zero sum game is known to be PPAD-hard (Chen and Deng, 2006 \cite{CD06}, Chen, Deng and Teng 2009 \cite{CDT09}). The problem, even when restricted to win-lose bimatrix games, remains PPAD-hard  (Abbott, Kane and Valiant, 2005 \cite{AKV05}). However, there do exist polynomial time tractable classes of win-lose bimatrix games - such as, very sparse games (Codenotti, Leoncini and Resta, 2006 \cite{CLR06}) and planar games (Addario-Berry, Olver and Vetta, 2007 \cite{AOV07}). 

We extend the results in the latter work to $K_{3,3}$ minor-free games and a subclass of $K_5$ minor-free games. Both these classes of games strictly contain planar games. Further, we sharpen the upper bound to unambiguous logspace, a small complexity class contained well within polynomial time. Apart from these classes of games, our results also extend to a class of games that contain both $K_{3,3}$ and $K_5$ as minors, thereby covering a large and non-trivial class of win-lose bimatrix games. For this class, we prove an upper bound of nondeterministic logspace, again a small complexity class within polynomial time. Our techniques are primarily graph theoretic and use structural characterizations of the considered minor-closed families. 

\end{abstract}


\section{Introduction}\label{section:intro}

Nash, in a seminal paper in 1951 \cite{Na51}, proved that every finite non-cooperative game always has at least one equilibrium point. In recent years, there has been a flurry of activity in the algorithmic and complexity theoretic community addressing various questions regarding the computation of Nash equilibria. On the one hand, related hardness results have been proved. Daskalakis, Goldberg and Papadimitriou \cite{DGP09} showed that finding a Nash equilibrium is PPAD complete for an $n$-player game ($n \geq 4$). (Refer Papadimitriou \cite{PPAD94} for a definition of the complexity class PPAD). Chen and Deng \cite {CD05} extended this result to the $3$-player case. The belief that the $2$-player case is polynomial time tractable continued, till Chen and Deng \cite {CD06} (also refer Chen, Deng and Teng \cite{CDT09}) proved PPAD completeness for the $2$-player case too. Abbott, Kane and Valiant \cite{AKV05} showed that restricting the payoffs of the players to 0 or 1 does not make the game easier. In other words, bimatrix $(0,1)$ or win-lose games are as hard as the general-sum case. Chen, Teng and Valiant \cite{CTV07} showed that even finding approximate Nash equilibria (correct upto a logarithmic number of bits) for win-lose bimatrix games is PPAD hard.  

On the other hand, search is on for classes of games where efficient (polynomial time, say) algorithms exist. Codenotti, Leoncini and Resta \cite{CLR06} proposed an algorithm to efficiently compute Nash equilibria in sparse win-lose bimatrix games. They showed that win-lose bimatrix games that have at most two winning positions per pure strategy can be solved in linear time. Addario-Berry, Olver and Vetta \cite{AOV07} showed that planar win-lose bimatrix games are polynomial time tractable. Planar win-lose bimatrix games are games where the bipartite graph obtained by considering the payoff matrices as adjacency matrices is planar. See Section~\ref{section:background} for formal definitions. They proved that:

\begin{theorem}(Addario-Berry,Olver and Vetta, 2007 \cite{AOV07}, Theorem 3.6)
There is a polynomial time algorithm for finding a Nash equilibrium in a two-player planar win-lose game.
\end{theorem}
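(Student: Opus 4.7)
I would start by constructing the \emph{game bigraph} $G$ associated with the win-lose bimatrix $(A,B)$: vertices are the row strategies $R$ and column strategies $C$, and on each pair $(i,j) \in R \times C$ we keep track of whether $A_{ij}=1$ (the row player wins) and whether $B_{ij}=1$ (the column player wins). Planarity of the game means planarity of $G$ with this edge set. The first step is to reduce the analytic problem of finding a mixed Nash equilibrium to a purely combinatorial search: guessing supports $S \subseteq R$, $T \subseteq C$ and asking that the induced bipartite subgraph on $S \cup T$ admits probability weightings $x$ on $S$ and $y$ on $T$ under which (i) every $i \in S$ is a best response against $y$ for the row player, (ii) every $j \in T$ is a best response against $x$ for the column player, and (iii) no strategy outside $S \cup T$ strictly beats the common payoff from the inside. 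For $0/1$ payoff matrices, (i) and (ii) become row-sum and column-sum balance conditions on the restricted incidence matrix.

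The crux of the proof is to exhibit, in any planar game that has no trivial pure equilibrium, a small canonical substructure that witnesses a mixed equilibrium. The natural candidates are very short even cycles in $G$, on which the uniform distribution trivially satisfies the balance conditions (i) and (ii), together with single edges giving pure equilibria. I would use Euler's formula: any simple planar bipartite graph satisfies $|E| \le 2|V|-4$, so iteratively peeling low-degree vertices either produces a vertex that is already a dominant strategy (yielding a pure equilibrium) or exposes a short even cycle that can be tried as a support pair. The algorithm therefore loops over (a polynomial number of) small candidate supports, verifies conditions (i)--(iii) by solving a small linear system, and either returns a NE or shrinks the graph by removing a strictly dominated vertex and recursing.

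The principal obstacle is condition (iii), the off-support deviation condition: a candidate support $(S,T)$ that satisfies the internal balance may be undone by a row or column outside of it that strictly improves on the common payoff. Handling this requires more than a local argument, because in principle a deviating strategy may be far away in $G$. I would address this by interleaving support-search with a dominance-removal phase, using the planar structure to show that each pruning step strictly reduces the size of $G$ and that only polynomially many such phases are needed. Concretely, planarity is used twice: once to guarantee the existence of a small structural witness in every non-trivial subgraph (via Euler's formula and low-degree vertices), and once to bound the interference between candidate supports via a planar face/separator argument. Putting the pieces together gives a polynomial-time algorithm; the combinatorial skeleton it uses is exactly what later sections of the present paper will try to extract for $K_{3,3}$-minor-free and (restricted) $K_5$-minor-free games.
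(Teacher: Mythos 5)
There is a genuine gap here. The engine of the cited result is not support enumeration plus pruning; it is a structural theorem about the \emph{directed} bipartite graph of the game: after reducing (as in Claim~1) to the case where $G_{R,C}$ is strongly connected and digon-free, one proves that some face of a planar embedding is bounded by an undominated induced directed cycle, and the uniform distribution on the two sides of that cycle is a Nash equilibrium (Lemmas~\ref{lemma:UICNash}--\ref{lemma:mainInAOV}). Your proposal never isolates this notion. Your candidate supports are ``short even cycles'' obtained from Euler's formula and low-degree peeling, but nothing forces such a cycle to satisfy your condition (iii): an outside vertex with two out-edges into the cycle (a dominating vertex) gives a strictly better deviation, and ruling this out is exactly the hard part. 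Moreover, the support that actually works can be a facial cycle of length $\Theta(n)$, so an algorithm that ``loops over a polynomial number of small candidate supports'' can simply miss every equilibrium; and the fallback of ``removing a strictly dominated vertex and recursing'' need not be available, since a planar win-lose game can have no pure equilibrium, no short usable cycle, and no strictly dominated pure strategy, at which point your loop has no progress measure.

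Two further points where the plan would not survive being made precise. First, the orientation of edges is essential: payoffs correspond to directed edges of $G_{R,C}$, the preprocessing needs strong connectivity and digon-freeness of this digraph, and the domination condition is about out-edges into the cycle; an undirected Euler/degree count of the kind you invoke ($|E|\le 2|V|-4$) does not see any of this, and indeed strong connectivity plus bipartiteness without planarity is not enough (the oriented $K_{3,3}$ example mentioned in Section~\ref{section:intro} has no undominated induced cycle), so whatever use is made of planarity must be finer than a sparsity bound. Second, a low-degree vertex is not ``already a dominant strategy,'' and no argument is offered that the interleaved dominance-removal phase terminates with a certified support; the separator remark is not developed into anything checkable. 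To repair the proof you would need precisely the missing lemma: every non-trivial, strongly connected, digon-free, bipartite planar digraph has an undominated facial (hence induced) cycle, which can be found by inspecting the faces of an embedding; with that in hand, polynomial time (indeed logspace, as observed in this paper) follows immediately.
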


We sharpen the above result by observing that polynomial time can be replaced by deterministic logarithmic space \Log\ and more importantly, we extend the result from planar games to the following classes of win-lose bimatrix games where we prove solvability in unambiguous logarithmic space \ULog\ for classes (1) and (2), and in nondeterministic logarithmic space \NLog\ for class (3). (Recall that \ULog\ $\subseteq$ \NLog). 
\begin{enumerate}
\item $K_{3,3}$ minor-free games, 
\item $K_5$-minor free games where the triconnected components are either planar or $V_8$, 
\item Games whose triconnected components are $K_5$, $V_8$ or planar. 
\end{enumerate}

For definitions of these classes, see Section~\ref{section:background}. Notice that all these classes strictly include planar graphs since from Kuratowski's Theorem \cite{Kur30} (also refer \cite{Wag37} by Wagner), it follows that planar graphs are exactly the class of graphs which exclude both $K_{3,3}$ and $K_5$ as minors. The class of games in (3) above strictly contains those in (1) and (2) (Ref. \cite{As85}, \cite{Wag37}), and also contains games that are neither $K_{3,3}$-minor-free nor $K_5$-minor-free. Hence, our results cover a large and non-trivial superclass of planar win-lose bimatrix games. Our results are motivated by a remark in \cite{AOV07} which indicates that planarity is a necessary condition for their method to work and exhibits an oriented $K_{3,3}$ as an example where their method does not apply. 

Our proofs for all these classes use the essential ingredients of the proof in \cite{AOV07}, in particular, showing that an undominated induced cycle exists for each of these classes. Such a cycle corresponds to a Nash equilibrium. (We outline these results from \cite{AOV07}, in Section~\ref{section:background}). We also show that such a cycle can be found in \ULog\ for classes (1) and (2), and in \NLog\ for class (3) (and hence, in polynomial time for all three classes). Our proofs, further, build on a triconnected decomposition of planar graphs and $K_{3,3}$-minor-free graphs (respectively, $K_5$-minor-free graphs). 

We discuss the proof outline in the following section. For details of these proofs, refer to Section~\ref{section:mainproof}. Section~\ref{section:background} contains background and preliminaries, and Section~\ref{section:futurework} concludes with some open problems. 

\section{Outline of Proof}\label{section:proofoutline}

\subsection{$K_{3,3}$ minor-free games}

Given a $K_{3,3}$ minor-free game we consider its underlying graph. The essential idea carried over from \cite{AOV07} is to identify an \emph{undominated induced cycle} in this graph. Such a cycle corresponds to a Nash equilibrium. (Refer Section~\ref{section:background} for definitions). 

Since we are looking for cycles of a particular kind we need to focus on any biconnected component of the underlying undirected graph (since even undirected cycles cannot span biconnected components). We further decompose the underlying biconnected graph into triconnected components using a method from \cite{DNTW09}. We show that if we can produce undominated induced cycles in each triconnected component (which inherits orientation to edges from the original graph), then there is a way to ``stitch'' them together to obtain at least one undominated induced cycle in the original graph.

The problem thus reduces to finding undominated induced cycles in the triconnected components of a $K_{3,3}$ minor-free graph which by a lemma of Asano \cite{As85} are exactly triconnected planar graphs or $K_5$'s. For the former, we know from \cite{AOV07} how to find such cycles and for the latter, we explicitly show how to find such cycles.

Notice that in the process of finding triconnected components we have lost bipartiteness and possibly altered the notion of domination. So we have to carefully deal with subdivisons of triconnected graphs, instead which preserve bipartiteness, domination and most properties of $3$-connectivity.

\subsection{Subclass of $K_{5}$ minor-free games}

Similar to the $K_{3,3}$-case outlined above, we consider the graph corresponding to the given $K_5$-minor-free game. In addition to being $K_5$-minor free, the triconnected components of the underlying graph are either planar or $V_8$. For this subclass, we prove the existence of an \emph{undominated induced cycle} and show how to find one. 

Again, it suffices to consider any biconnected component of the underlying undirected graph and we use a method from \cite{DNTW09} to further decompose the biconnected component into triconnected components. These triconnected components are either planar, $V_8$ (the four rung Mobius ladder) or its $4$-connected components are all planar (Ref. Wagner \cite{Wag37}, Khuller \cite{Khu88}). We prove that every strongly connected orientation of $V_8$ has an undominated induced cycle as well. As the triconnected components are either planar or $V_8$, each component has an undominated induced cycle and we show that there is a way to ``stitch'' them together to obtain an undominated induced cycle in the original graph. \\

The proof for games whose triconnected components are $K_5$, $V_8$ or planar follows from the proofs for the other two classes discussed above.   

\section{Background and Preliminaries}\label{section:background}

\subsection{Win-Lose Bimatrix Games}

A win-lose bimatrix game is a two-player game in normal form where the payoffs to the players are either $0$ or $1$. Given sets of actions of the players (the row player and the column player), the aim of each player is to choose a strategy that maximizes (or minimizes) his/ her expected payoff. We formally define these below. We shall assume that both the players want to maximize their respective payoffs. 

\begin{definition}
A win-lose (or $(0,1)$-) bimatrix game is specified by two $R \times C$ matrices $M_R$ and $M_C$ with entries from $\{0,1\}$ where the payoff for the row (column) player playing pure strategy $r_i \in R$ (respectively, $c_j \in C$) is $M_R(i,j)$ (respectively, $M_C(i,j)$) if the column (respectively, row) player plays strategy $c_j \in C$ (respectively, $r_i \in R$) in response.

A pure strategy of a player is an action choice of the player. Pure strategies of the row (column) player are just rows (respectively, columns) that index the payoff matrices of the players. A mixed strategy of the row (column) player is a probability distribution over the set of rows (respectively, columns). 
\end{definition}

\begin{definition}
A best response of the row (column) player is a mixed strategy $x$ (respectively, $y$) of the row (respectively, column) player that maximizes his (her) expected payoff given a mixed strategy $y$ (respectively $x$) of the other player. 

A Nash equilibrium is a pair of strategies that are mutual best responses. In other words, (mixed) strategies $x^*$ of player $1$ (the row player) and $y^*$ of player $2$ (the column player) constitute a Nash equilibrium if and only if

$x^{*^T} M_R \mbox{ }y^* \geq x^T M_R \mbox{ }y^*$, for all strategies $x$ of player $1$, and 

$x^{*^T} M_C \mbox{ }y^* \geq x^{*^T} M_C \mbox{ }y$, for all strategies $y$ of player $2$. 
\end{definition}

\begin{definition}
Given a win-lose bimatrix game specified with the matrices $M_R, M_C$, its associated (or underlying) graph is the bipartite directed graph, $G_{R,C} \mbox{ } = (V, E)$ with bipartitions $R, C$ (that is, $V = R \cup C$) and the following adjacency matrix:
\[
\left(
\begin{array}{cc}
0 &    M_R \\
M_C^T & 0 \\
\end{array}
\right)
\]
\end{definition}

\begin{definition}
An {\em undominated induced cycle} in $G_{R,C}$ is a cycle $\mathcal{C}$ such that there are no vertices $\{u, v, w\}$, where $v$ and $w$ ($v \not= w$) are on $\mathcal{C}$, such that $(u, v) \in \mathcal{C}$ and $(u, w) \in \mathcal{C}$. 

An induced cycle $\mathcal{C}$ is {\em dominated} by a vertex $v$ (not on $\mathcal{C}$), if there are two or more edges {\em from} $v$ {\em to} vertices on $\mathcal{C}$). 
\end{definition}

Addario-Berry, Olver and Vetta \cite{AOV07} showed that if the underlying bipartite directed graph ($G_{R,C}$ as defined above) of a win-lose bimatrix game is planar, a Nash equilibrium can be computed in polynomial time. We use the following results from \cite{AOV07}. \\

{\noindent}{\textbf{Claim 1. }}(Sections 2.1 and 2.2 in \cite{AOV07})
It suffices to look at the case when $G_{R,C}$ is strongly connected and is free of digons (cycles of length 2). 

\begin{lemma}\label{lemma:UICNash}(Corollary 2.3 in \cite{AOV07})
Let $S \subset V$ and let the subgraph restricted to $S$, $G_{R,C}[S]$, be an induced cycle. $S$ corresponds to a (uniform) Nash equilibrium if the cycle $G_{R,C}[S]$ is undominated. 
\end{lemma}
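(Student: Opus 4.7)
The plan is to exhibit the uniform distribution on the two sides of the cycle as a Nash equilibrium and to verify the defining inequalities by reading each hypothesis off the combinatorics of $\mathcal{C}$. Since $G_{R,C}$ is bipartite, the directed cycle $\mathcal{C}$ has even length $2k$ and alternates between $R$- and $C$-vertices; I would set $S_R = S \cap R$ and $S_C = S \cap C$, giving $|S_R| = |S_C| = k$, and propose the candidate equilibrium $x^{*} = \frac{1}{k}\mathbf{1}_{S_R}$ and $y^{*} = \frac{1}{k}\mathbf{1}_{S_C}$.

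The heart of the verification is an on-support indifference computation driven by inducedness. I would observe that each $r \in S_R$ has, among the vertices of $S$, exactly two neighbours in $\mathcal{C}$---its predecessor and its successor along the cycle---both necessarily in $S_C$ by bipartiteness, with only the successor reached by an outgoing edge of $G_{R,C}$. Hence there is exactly one $c \in S_C$ with $M_R(r, c) = 1$, so every pure strategy $r \in S_R$ yields expected payoff exactly $1/k$ against $y^{*}$. A chord inside $S$ would add a second out-neighbour in $S_C$ and destroy this uniformity, so this is precisely where the induced hypothesis is used. A symmetric statement for each $c \in S_C$ against $x^{*}$ follows from the same reading of the cycle from the column side.

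To rule out profitable off-support deviations I would invoke undomination. A row strategy $r' \in R \setminus S_R$ lies outside $\mathcal{C}$, and by hypothesis has at most one outgoing edge into $S$; since the outgoing edges of a row vertex land entirely in $C$, this bounds the number of $c \in S_C$ with $M_R(r', c) = 1$ by one, so playing $r'$ against $y^{*}$ pays at most $1/k$. Linearity of expectation extends the bound to arbitrary mixed deviations, and the symmetric argument, applied to a column strategy $c' \in C \setminus S_C$ using $M_C$, handles the column player. The only delicate point is the correct parsing of \emph{induced cycle} in the directed bipartite setting, namely that every cycle vertex has exactly one cycle-out-neighbour inside $S$; once that observation is in hand, the undomination hypothesis plugs in cleanly and the rest is a routine Nash verification.
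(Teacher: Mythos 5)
Your proof is correct. Note that the paper does not prove this statement at all---it is imported verbatim as Corollary 2.3 of the cited work of Addario-Berry, Olver and Vetta---and your verification (uniform strategies $x^{*}=\frac{1}{k}\mathbf{1}_{S_R}$, $y^{*}=\frac{1}{k}\mathbf{1}_{S_C}$, indifference at value $1/k$ on the support via inducedness of the cycle, and payoff at most $1/k$ for any off-support pure strategy via the undomination hypothesis, extended to mixed strategies by linearity) is exactly the standard argument underlying that cited result, with the bipartite/directed bookkeeping handled correctly.
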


\begin{lemma}\label{lemma:UCinPlan}(Theorem 3.4 in \cite{AOV07})
Any non-trivial, strongly connected, bipartite, planar graph contains an undominated facial cycle.
\end{lemma}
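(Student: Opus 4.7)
The plan is to fix a planar embedding of $G$ and produce a face whose boundary is a directed cycle of $G$ and is undominated. By Claim 1, I may assume $G$ is strongly connected and has no digons.

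I would first establish that some face is bounded by a directed cycle, via the following minimality argument. Among all directed cycles of $G$, choose $C$ whose bounded side in the plane contains the fewest interior vertices. If some vertex $v$ lay strictly inside $C$, then by strong connectivity there are directed paths from $v$ to some $y \in V(C)$ and from some $x \in V(C)$ to $v$, both with internal vertices trapped in the interior (any directed walk leaving the interior must pass through a vertex of $V(C)$). Splicing them into a simple directed path $P$ from $x$ to $y$ and closing with the directed arc of $C$ from $y$ back to $x$ yields a directed cycle with strictly smaller interior, contradicting minimality; hence $C$ is facial. A cleaner alternative is to note that the directed dual of the embedding (whose dual edge of $e$ points from the face on $e$'s right to the face on its left) must be acyclic since $G$ is strongly connected, so it has a source, and a source in this dual is precisely a face whose bounding walk is a directed cycle of $G$.

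For the main step I would refine the minimality: among directed facial cycles, pick $C$ bounding a face $F$ whose complementary side contains as few faces of the embedding as possible. Assume for contradiction $C$ is dominated by a vertex $u$ in the complementary side, with $u \to v_1$ and $u \to v_2$ for distinct $v_1, v_2 \in V(C)$. The two edges $uv_1, uv_2$ together with the directed arc of $C$ from $v_1$ to $v_2$ enclose a proper subregion $R$ of the complementary side. The goal is to exhibit, inside $R$, a directed facial cycle $C'$ of $G$ whose complementary side has strictly fewer faces than $C$'s, contradicting minimality. Because both edges at $u$ are outgoing, the boundary of $R$ itself is not a directed cycle, so one cannot simply read $C'$ off of it. Instead one uses strong connectivity of $G$ together with the planar topology of $R$ (any directed walk escaping $R$ can only do so through a vertex of $V(C)$ lying on the bounding arc, so shortcutting produces a directed walk entirely inside $R$) to obtain a directed cycle inside $R$; applying the interior-minimization argument of the first step within $R$ then extracts a directed facial cycle $C' \subseteq R$ which, by construction, has a complementary side containing strictly fewer faces than that of $C$. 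Bipartiteness is used to rule out length-$2$ degeneracies that could otherwise obstruct the shortcutting.

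The main obstacle is clearly the undomination step. The dominator $u$'s two outgoing edges do not combine with an arc of $C$ to form a directed cycle, so the reduction to a smaller instance cannot be purely local: it must bootstrap on global strong connectivity, and one must carefully use bipartiteness and planarity to ensure that the new facial cycle $C'$ genuinely reduces the chosen combinatorial measure. Making this precise, and handling the case where $u$'s in-edges in $R$ may be scarce, is the technically delicate part of the argument.
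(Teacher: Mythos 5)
The paper itself does not prove this lemma; it is imported verbatim as Theorem~3.4 of \cite{AOV07}, so your attempt has to stand on its own, and as written it has a genuine gap at its core. The minimization driving your undomination step is vacuous: if $C$ is a directed \emph{facial} cycle bounding the face $F$, then one side of $C$ is exactly $F$ and the ``complementary side'' contains every other face, so the quantity you minimize equals (total number of faces)$\,-\,1$ for \emph{every} directed facial cycle. Consequently the promised cycle $C'\subseteq R$ with ``strictly fewer faces on its complementary side'' cannot exist, and the contradiction never materializes. The measure that could plausibly work is the number of faces inside the region $R$ cut off by the two domination edges and an arc of $C$, minimized over pairs (facial cycle, dominating vertex) --- but that is not what you wrote, and repairing it exposes the second, deeper gap: your claim that strong connectivity forces a directed facial cycle of $G$ strictly inside $R$ is both unproven and false as stated. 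For example, $R$ may consist of a single face bounded by $u\to v_1$, $v_1\to w$, $w\to v_2$ and $u\to v_2$; this is a perfectly legal bipartite, digon-free local configuration, and $R$ then contains no directed cycle whatsoever. So the ``bootstrap on global strong connectivity'' is precisely the missing content of the proof, and bipartiteness must enter more essentially than ``ruling out length-$2$ degeneracies'' (digons are already excluded by Claim~1 before this lemma is invoked).

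A smaller issue in your first step: a directed cycle minimizing the number of \emph{vertices} in its interior need not be facial, since chords of the cycle may still be drawn inside; you should minimize the number of enclosed faces (or enclosed edges and vertices), under which both an interior vertex and an interior chord yield a strictly smaller directed cycle. Your dual formulation (strong connectivity of $G$ implies the directed dual is acyclic, and a source face is bounded by a consistently oriented closed walk) is the cleaner route and is fine when the graph is $2$-connected; note though that strong connectivity excludes bridges but not cut vertices, at which face boundaries are closed walks rather than simple cycles, so a little care (or the reduction to biconnected components used elsewhere in the paper) is needed even there.
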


\begin{lemma}\label{lemma:mainInAOV}(Theorem 3.1 in \cite{AOV07})
Any non-trivial, strongly connected, bipartite, planar graph has an undominated induced cycle. 
\end{lemma}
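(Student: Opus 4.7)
The plan is to apply Lemma~\ref{lemma:UCinPlan}, which yields an undominated facial cycle, and then to upgrade it to an undominated induced cycle by restricting attention to a single biconnected block of $G_{R,C}$. The key observation is that in any $2$-connected plane graph every facial cycle is induced, so the main remaining issue is to locate a block in which Lemma~\ref{lemma:UCinPlan} applies and to transfer its conclusion back to the full graph.

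I would first invoke Claim~1 to assume $G_{R,C}$ is strongly connected and digon-free. This forces the underlying undirected graph to have no bridges, and hence every nontrivial block (biconnected component) is in fact $2$-connected. Since $G_{R,C}$ is nontrivial and strongly connected, some block $B$ contains a directed cycle. I would then verify that $B$, with its inherited orientations, is itself nontrivial, strongly connected, bipartite, and planar. The only non-immediate property is strong connectivity of $B$: any simple directed path between two vertices of $B$ in $G_{R,C}$ must stay inside $B$, because leaving and returning to $B$ would require using the separating cut vertex twice.

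Next, I would apply Lemma~\ref{lemma:UCinPlan} to $B$ to obtain an undominated facial cycle $\mathcal{C}\subseteq B$. Because $B$ is $2$-connected, every facial cycle of a plane embedding of $B$ is induced; so $\mathcal{C}$ is induced in $B$, and, since any edge of $G_{R,C}$ between two vertices of $B$ already lies in $B$, also induced in $G_{R,C}$.

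The main obstacle is showing that $\mathcal{C}$, known to be undominated within $B$, remains undominated in the full graph $G_{R,C}$. A dominating vertex $w\in B\setminus\mathcal{C}$ contradicts Lemma~\ref{lemma:UCinPlan} applied to $B$ directly. The nontrivial case is $w\notin B$ with directed edges to two distinct $v,v'\in\mathcal{C}$: the edges $(w,v)$ and $(w,v')$ then lie in blocks $B_1\neq B$ and $B_2\neq B$ respectively, which forces $v$ to be the cut vertex at $B\cap B_1$, $v'$ the cut vertex at $B\cap B_2$, and $w$ to lie in both $B_1$ and $B_2$. Tracing this in the block-cut-vertex tree yields a cycle through $B,\,v,\,B_1,\,w,\,B_2,\,v',\,B$, contradicting the fact that the block-cut-vertex graph is a tree. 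This block-tree analysis is the step requiring real care, and it is what makes the reduction to a single block rigorous.
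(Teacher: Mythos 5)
Your key step fails. The claim that ``in any $2$-connected plane graph every facial cycle is induced'' is false: $2$-connectivity guarantees that every face is bounded by a cycle, but that cycle can have chords; only for (subdivisions of) \emph{triconnected} planar graphs are facial cycles induced, which is exactly the hypothesis under which the paper makes this inference in Lemma~\ref{lemma:UICin3connPlan}. Concretely, take the directed hexagon $a\to b\to c\to d\to e\to f\to a$ with colour classes $\{a,c,e\}$ and $\{b,d,f\}$, and add the chord $d\to a$. This graph is bipartite, digon-free, strongly connected, planar and $2$-connected, yet the face bounded by the $6$-cycle is facial but not induced because of the chord $da$; moreover that $6$-cycle is undominated under the paper's definition (a dominating vertex must lie \emph{off} the cycle, and here no vertex does), so even the undominated facial cycle delivered by Lemma~\ref{lemma:UCinPlan} need not be induced. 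Your upgrade from ``facial'' to ``induced'' would only be automatic if ``undominated'' were strengthened to forbid \emph{any} vertex, including one on the cycle, from having two out-neighbours on the cycle (which would exclude chords), but that is not what the quoted statement of Lemma~\ref{lemma:UCinPlan} gives you. Passing to biconnected blocks does not repair this, since blocks are only $2$-connected.

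The remaining parts of your argument are sound but aimed at the wrong obstacle: strong connectivity of a block, and the block--cut-tree argument excluding an external dominator, are both correct, yet the statement already assumes $G_{R,C}$ strongly connected, and the real difficulty is chords, which the block reduction never addresses. Note also that the paper does not prove this lemma at all --- it is imported verbatim from \cite{AOV07} (Theorem 3.1 there) --- so there is no internal proof to match; a repair in the spirit of your plan would have to descend to the triconnected level: decompose into triconnected components with consistently oriented and subdivided virtual edges (Lemma~\ref{lemma:decomp3Conn}), apply Lemma~\ref{lemma:UCinPlan} there, where facial cycles really are induced (Lemma~\ref{lemma:UICin3connPlan}), and then stitch cycles across separating pairs while controlling domination (Lemmas~\ref{lemma:twoCycleStitch} and \ref{lemma:cycleStitch}), i.e.\ essentially the machinery the paper builds for its minor-free classes, or else reproduce the original AOV07 argument that directly produces a chordless undominated cycle.
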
 

\subsection{Minor-free Graphs}

\begin{definition}
A subdivision of a graph $G$ is a graph obtained by subdividing its edges by adding more vertices. 
\end{definition}

\begin{definition} 
Given a graph $G$, a minor of $G$ is a subgraph $H$ that can be obtained from $G$ by a finite sequence of edge-removal and edge-contraction operations. If G does not contain $K_{3,3}$ ($K_5$) as a minor, it is called $K_{3,3}$-minor-free ($K_5$-minor-free). 
\end{definition}

We are interested in $K_{3,3}$-minor-free graphs and $K_5$-minor-free graphs due to Kuratowski's (and Wagner's) theorems. 

\begin{theorem} (Kuratowski, 1930 \cite{Kur30}, Wagner, 1937 \cite{Wag37})
A finite graph is planar if and only if it contains neither a $K_{3,3}$-minor nor a $K_5$-minor. 
\end{theorem}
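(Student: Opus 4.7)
The plan is to prove the two directions separately, with the forward (necessity) direction being straightforward and the reverse (sufficiency) direction requiring the bulk of the work.

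For the forward direction, I would first establish that $K_5$ and $K_{3,3}$ are themselves non-planar. For $K_5$, Euler's formula $|V| - |E| + |F| = 2$ combined with the face-edge inequality $2|E| \geq 3|F|$ for simple graphs gives $|E| \leq 3|V|-6$, which $K_5$ violates ($10 > 9$). For $K_{3,3}$, the same approach works but with $2|E| \geq 4|F|$ (since $K_{3,3}$ is bipartite and has no triangles), yielding $|E| \leq 2|V|-4$, which is violated ($9 > 8$). Then I would observe that planarity is preserved under both minor operations: deleting an edge from a planar embedding preserves planarity trivially, and contracting an edge $uv$ can be realized by collapsing $v$ along the edge into $u$ inside a planar drawing, merging the incident edges without introducing crossings. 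Therefore if $G$ contained $K_5$ or $K_{3,3}$ as a minor, $G$ itself would have a non-planar minor and hence be non-planar, contrary to assumption.

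For the reverse direction, I would argue by contradiction, taking a minimum counterexample $G$: a non-planar graph with neither $K_{3,3}$ nor $K_5$ as a minor, with the fewest vertices and then the fewest edges. The first structural step is to show $G$ is $3$-connected. If $G$ were disconnected or had a cut vertex, planarity of each block (inherited by minimality) would combine to give a planar embedding of $G$, a contradiction. If $G$ had a $2$-separation $\{u,v\}$, I would split $G$ along $\{u,v\}$ into smaller pieces, augment each with the virtual edge $uv$, invoke minimality to embed each piece planarly with $uv$ on the outer face, and glue the embeddings along $uv$ to embed $G$, again a contradiction (this is essentially the standard argument that a minimal Kuratowski counterexample is $3$-connected).

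With $G$ reduced to the $3$-connected case, I would invoke Tutte's wheel theorem, which guarantees that every $3$-connected graph on at least five vertices contains an edge $e$ whose contraction $G/e$ is still $3$-connected. Since $G/e$ has fewer vertices and still excludes $K_5$ and $K_{3,3}$ as minors, by minimality $G/e$ admits a planar embedding, and in fact by Tutte's theorem on $3$-connected planar graphs this embedding can be chosen so that every face is a convex polygon. The heart of the argument is then to lift this embedding of $G/e$ back to a planar embedding of $G$ by uncontracting $e$: the identified vertex $w \in G/e$ splits back into the endpoints $x,y$ of $e$, and the edges incident to $w$ must be partitioned between $x$ and $y$ in the rotation around $w$. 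A careful case analysis shows that either this lifting succeeds, giving a planar embedding of $G$, or the obstruction to lifting produces an explicit $K_5$- or $K_{3,3}$-minor in $G$, either of which contradicts our hypothesis.

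The main obstacle, and the place where real work is concealed, is this final lifting step and its case analysis on the cyclic order of neighbors around the contracted vertex; the rest is structural reduction. I would cite Wagner's original paper and a modern treatment (e.g., Diestel's graph theory text) for the technical details of this case analysis rather than reproducing it in full, since the theorem is being used here only as an off-the-shelf tool.
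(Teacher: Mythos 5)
This statement is a classical theorem that the paper does not prove at all: it is quoted purely as background, with the proof delegated to Kuratowski \cite{Kur30} and Wagner \cite{Wag37}, and it enters the paper only through the remark that planar graphs are exactly the graphs excluding both $K_{3,3}$ and $K_5$ as minors. Your proposal, by contrast, reconstructs the standard textbook (Wagner/Diestel-style) argument: Euler-formula counting plus minor-closedness of planarity for necessity, and for sufficiency a minimum counterexample, reduction to the $3$-connected case via blocks and $2$-separations, the contraction lemma for $3$-connected graphs, and the lifting of a (convex) embedding of $G/e$ back to $G$. That outline is sound and is indeed the accepted route; since the paper offers nothing to compare it against, the only question is completeness of your sketch.

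Two points deserve attention if you intend this as more than a pointer to the literature. First, in the $2$-separation step, minimality by itself does not directly apply to the pieces once you add the virtual edge $uv$; you need the standard observation that each augmented piece is a \emph{minor} of $G$ (the virtual edge is realized by a $u$--$v$ path through the other piece, which exists by $2$-connectedness), so the augmented pieces still exclude $K_5$ and $K_{3,3}$ and are smaller, and only then does induction give their planar embeddings. Second, the final lifting step --- splitting the contracted vertex and distributing its incident edges between the two endpoints, with the case analysis showing that any obstruction yields a $K_5$- or $K_{3,3}$-minor --- is exactly where the substance of the theorem lies, and you explicitly outsource it to Wagner and to Diestel's text. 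As a self-contained proof the proposal is therefore an outline with the core omitted; as a justification for the paper's use of the theorem as an off-the-shelf tool, it is entirely adequate and consistent with how the paper itself treats the result.
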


\begin{definition} 
A $k$-connected graph is one which remains connected on removing less than $k$ vertices. A $2$-connected graph is called biconnected and $3$-connected graphs together with cycles constitute triconnected graphs. 
\end{definition} 

\begin{definition}
Consider two graphs $G_1$ and $G_2$ each containing cliques of equal size. The clique-sum of $G_1$ and $G_2$ is a graph formed from their disjoint union by identifying pairs of vertices in the corresponding equal-sized cliques to form a shared clique, and possibly deleting some of the clique edges. A $k$-clique-sum is a clique-sum in which both cliques have at most $k$-vertices. 
\end{definition}

We use the following results regarding $K_{3,3}$-minor-free and $K_5$-minor-free graphs. 

\begin{lemma}\label{lemma:K33_tricon} (Asano, 1985 \cite{As85}. Also refer Vazirani, 1989 \cite{Vaz89})
Let $G$ be a $K_{3,3}$-minor-free graph. The triconnected components of $G$ are either planar or $K_5$. 
\end{lemma}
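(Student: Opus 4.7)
The plan is to prove Asano's lemma by analyzing triconnected components of a $K_{3,3}$-minor-free graph $G$ and showing that any non-planar triconnected component must be $K_5$ itself. First I would observe that each triconnected component $H$ of $G$ arises via a sequence of operations on biconnected pieces that can be simulated by edge deletions and contractions, so $H$ is a minor of $G$ and therefore inherits $K_{3,3}$-minor-freeness. A triconnected component is either a cycle, a bond (two vertices joined by parallel edges), or a 3-connected simple graph; cycles and bonds are obviously planar, so the only interesting case is a 3-connected $H$ that excludes $K_{3,3}$ as a minor. If $H$ additionally excludes $K_5$ as a minor, then Kuratowski--Wagner immediately gives planarity; hence the entire lemma reduces to the following claim.

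\smallskip

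\noindent\textbf{Main Claim.} If $H$ is 3-connected, $K_{3,3}$-minor-free, and has $K_5$ as a minor, then $H \cong K_5$.

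\smallskip

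To prove the Main Claim I would fix a $K_5$-model in $H$, that is, pairwise disjoint vertex sets $V_1, \ldots, V_5$ such that each $H[V_i]$ is connected and for every $i \ne j$ there is at least one edge between $V_i$ and $V_j$. Choose the model to minimize $\sum_i |V_i|$, and let $U = V(H) \setminus (V_1 \cup \cdots \cup V_5)$. I would split into two cases. In the first case $U = \emptyset$ and every $V_i$ is a single vertex: then the five branch vertices already span a $K_5$ subgraph, and $V(H)$ consists of exactly these five vertices, forcing $H \cong K_5$. In the second case either $U \ne \emptyset$ or some $V_i$ has at least two vertices; in both subcases I would use 3-connectivity of $H$ (Menger's theorem) to produce three internally disjoint paths from some judiciously chosen vertex to three distinct branch sets. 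Combined with the existing $K_5$ edges among the branch sets, these three paths let me re-partition the vertices into six branch sets realizing a $K_{3,3}$-model, contradicting $K_{3,3}$-minor-freeness. The minimality of the chosen $K_5$-model is what prevents the extra vertices from being harmlessly absorbed into the $V_i$'s and is what forces the three Menger paths to land in three \emph{different} branch sets.

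The main obstacle will be the case analysis in the ``extra structure'' case: when some $V_i$ is larger than a single vertex, the minimality of the model constrains $H[V_i]$ to be a tree each of whose leaves sends an edge to a distinct other branch set, and one must verify that a Menger-style triple of paths can always be combined with this tree structure to exhibit a $K_{3,3}$-minor. Exhaustively handling the possible shapes of $H[V_i]$ (path versus branching tree) and of the attachments to the other branch sets is the delicate combinatorial core, and is where I would lean on the arguments of Asano \cite{As85} and Vazirani \cite{Vaz89}.
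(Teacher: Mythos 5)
The paper does not prove this lemma: it is stated as an imported result, cited to Asano \cite{As85} and Vazirani \cite{Vaz89}. Your reduction down to the Main Claim is sound. Cycles and bonds (the degenerate triconnected components) are trivially planar; a triconnected component is a minor of $G$ (each virtual edge is realized by contracting the path on the opposite side of its separating pair), so it inherits $K_{3,3}$-minor-freeness; and Kuratowski--Wagner handles the case of no $K_5$-minor. The Main Claim is then precisely D.~W.~Hall's 1943 theorem that every $3$-connected graph on at least six vertices containing a subdivision of $K_5$ also contains a subdivision of $K_{3,3}$ (together with the observation that a $K_5$-minor in a graph with no $K_{3,3}$-minor must come from a $K_5$-subdivision, since by Wagner any graph with a $K_5$-minor contains a topological $K_5$ or a topological $K_{3,3}$).

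The gap is that your proof of the Main Claim is a sketch, not a proof. The pivotal step, ``use Menger to produce three internally disjoint paths from some judiciously chosen vertex to three distinct branch sets and re-partition into a $K_{3,3}$-model,'' is exactly where the work lies: Menger gives three paths from the chosen vertex to the \emph{union} of the branch sets, not automatically to three distinct ones, and those paths must also be routed so as to avoid the two branch sets you intend to place on the same side of the $K_{3,3}$ as the new vertex. You acknowledge this and explicitly defer the case analysis to Asano and Vazirani, so in the end you are citing the same sources the paper cites, with the reduction to the $3$-connected case serving as scaffolding around that citation. As a self-contained proof the combinatorial core is missing; as an exposition of how the known proof is structured, your reduction is correct.
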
 

\begin{lemma}\label{lemma:K5_tricon} (Wagner, 1937 \cite{Wag37}. Also refer Khuller, 1988 \cite{Khu88})
Let $G$ be a $K_{5}$-minor-free graph. The non-planar triconnected components of $G$ are either $V_8$ or 3-clique-sums of $4$-connected planar graphs. In other words, all $K_{5}$-minor-free graphs can be obtained by repeatedly taking $3$-clique-sums of planar graphs and $V_8$. 
\end{lemma}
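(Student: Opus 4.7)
The plan is to prove both directions of Wagner's structural characterization. For the ``if'' direction, I first verify that $V_8$ is $K_5$-minor-free by direct inspection of its $8$ vertices and $12$ edges, and that planar graphs are $K_5$-minor-free by Kuratowski--Wagner. To see that $3$-clique-sums preserve $K_5$-minor-freeness, suppose $G = G_1 \oplus_S G_2$ with $|S|\leq 3$ contained a $K_5$-minor with branch sets $B_1,\dots,B_5$. Since $K_5$ is $4$-connected, the at-most-three branch sets meeting $S$ cannot separate the others into two non-empty groups, so all five branch sets can be realized within a single side, say $G_1$ (replacing $B_i \cap V(G_2)$ by suitable vertices of the clique $S$, which is available because $S$ is a clique on at most three vertices in $G_1$), contradicting the $K_5$-minor-freeness of $G_1$.

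For the ``only if'' direction, the reduction proceeds as follows. Decompose $G$ along its $2$-separations into triconnected components, each of which remains $K_5$-minor-free. Within each triconnected component, split further along non-trivial $3$-separations via $3$-clique-sums until every atomic piece is either $4$-connected or admits only trivial $3$-separations (the process terminates because each split strictly reduces the number of vertices on one side). Note that $V_8$, being cubic, has no non-trivial $3$-separation and so persists as an atomic piece. The whole statement therefore reduces to the key structural lemma: \emph{every $4$-connected $K_5$-minor-free graph is either planar or isomorphic to $V_8$.}

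To prove this key lemma, take a $4$-connected non-planar $K_5$-minor-free graph $G$. By Kuratowski's theorem $G$ contains a subdivision of $K_5$ or of $K_{3,3}$; the first case is ruled out since contracting the internal path vertices would yield an actual $K_5$-minor. So fix a $K_{3,3}$-subdivision $H \subseteq G$ with branch vertices $\{a_1,a_2,a_3\} \cup \{b_1,b_2,b_3\}$ and nine internally disjoint $a_i b_j$-paths. Using $4$-connectivity, analyze how the remaining vertices of $G$ attach to $H$ as bridges. In most configurations, rerouting and contracting bridge attachment paths yields an actual $K_5$-minor; the remaining configurations force a planar embedding. The only configuration that simultaneously avoids a $K_5$-minor and preserves non-planarity is a second ``cross-chord'' between the two sides of the $K_{3,3}$, which after contracting internal path vertices yields precisely $V_8$.

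The main obstacle is this final case analysis: one must enumerate bridge attachment patterns on a $K_{3,3}$-subdivision, use $4$-connectivity to bound the number of attachment vertices per bridge, and argue that each surviving configuration either exhibits a $K_5$-minor or collapses to $V_8$. This is the content of Wagner's original argument in \cite{Wag37}; an algorithmic treatment, relevant to the logspace upper bounds needed in the present paper, is given by Khuller in \cite{Khu88}.
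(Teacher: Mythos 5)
The paper does not actually prove this lemma; it is cited as Wagner's structure theorem (\cite{Wag37}, see also \cite{Khu88}) and used as a black box. So there is no ``paper's own proof'' to compare against, and your attempt is a from-scratch reconstruction of Wagner's argument. Your overall plan is the right one: establish that $3$-clique-sums preserve $K_5$-minor-freeness (for the ``if'' direction), decompose along $2$- and $3$-separations until pieces are atomic (for the ``only if'' direction), and reduce to a structural lemma about highly connected pieces. The ``if'' direction, while informal, is essentially sound: since $K_5$ is $4$-connected, at most three branch sets can meet a $3$-separator $S$, the remaining branch sets must all lie on one side (pairwise adjacency rules out splitting them across sides), and the clique on $S$ lets you push the whole minor into that side.

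The genuine gap is in your stated key lemma. You write ``every $4$-connected $K_5$-minor-free graph is either planar or isomorphic to $V_8$,'' and your final paragraph proceeds by taking a $4$-connected non-planar $K_5$-minor-free $G$ and arguing it must collapse to $V_8$. But $V_8$ is $3$-regular, hence at most $3$-connected, so no $4$-connected graph can be $V_8$. The correct form of Wagner's key lemma is simply that \emph{every $4$-connected $K_5$-minor-free graph is planar}; the bridge analysis over a $K_{3,3}$-subdivision in the $4$-connected case must terminate in a contradiction (every attachment pattern yields a $K_5$-minor), not in $V_8$. The graph $V_8$ enters only through the complementary case that your decomposition already isolates but your key lemma ignores: $3$-connected, non-planar, $K_5$-minor-free atoms that admit no non-trivial $3$-separation. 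The assertion to prove there is that the unique such graph is $V_8$. You gesture at this when you observe that $V_8$, being cubic, persists as an atom, but then conflate the two cases when stating and arguing the key lemma. Cleanly separating (a) $4$-connected $K_5$-minor-free $\Rightarrow$ planar, and (b) $3$-connected non-planar $K_5$-minor-free with only trivial $3$-separations $\Rightarrow$ $V_8$, and running the $K_{3,3}$-subdivision bridge analysis under the weaker hypothesis in case (b), is what Wagner's original argument actually does; as written, your case analysis is aimed at a vacuous hypothesis.
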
 

The class of $K_5$-minor free graphs where we prove existence of undominated induced cycles, namely $K_5$-minor free graphs whose triconnected components are either $V_8$ or planar, can also be obtained by taking $3$-clique-sums of planar graphs and $V_8$, but we require every clique-sum to be between two graphs. Repeatedly taking clique-sums may lead to graphs with no undominated induced cycles. For example, $K_{3,3}$ can be obtained by a 3-clique-sum operation on three $K_4$'s. We can orient these edges to obtain the counter example shown in \cite{AOV07}.  

\subsection{Triconnected decomposition of a graph}
Given a graph we can easily find its cut-vertices and biconnected (i.e. $2$-vertex connected) components, see e.g. \cite{DLNTW09} for how to do this in \Log. If we start with a digraph instead, there is virtually no change in the procedure since two biconnected components do not share any edge.

In 1973, Hopcroft and Tarjan \cite{HT73} proposed a technique to decompose a graph into its triconnected components\footnote{i.e. $3$-vertex connected components or cycles}. Though the procedure to find the triconnected components is not as simple as that for finding biconnected components, it can also be accomplished in \Log\ - see e.g. \cite{DNTW09}. As an output of this procedure we obtain a triconnected-component-separating-pair tree which has triconnected components and separating pairs\footnote{i.e. pairs of vertices removing which leaves the graph disconnected. Not all the separating pairs in a graph are present in the triconnected-component-separating-pair tree - only triconnected separating pairs - for details see \cite{DNTW09}} as nodes with an edge between a triconnected component and a separating pair if and only if the separating pair belongs to the component. 


\section{Finding a Nash Equilibrium}\label{section:mainproof}

In this section, we prove that we can find a Nash equilibrium in \ULog\ (and hence in polynomial time) for (1) $K_{3,3}$-minor-free win-lose bimatrix games, (2) $K_5$-minor-free win-lose bimatrix games with triconnected components being planar or $V_8$, and in \NLog\ (and hence in polynomial time) for (3) Win-lose bimatrix games whose triconnected components are planar, $K_5$ or $V_8$. 

From Claim 1, it suffices to look at bipartite digraphs that are strongly connected and free of digons. It is easy to see that this initial pre-processing can be done in \Log. If the graph $G_{R,C}$ has a digon, a pure equilibrium can easily be found in \Log. Checking if $G_{R,C}$ is strongly connected can be done in \Log\ as well. If $G_{R,C}$ is not strongly-connected, we can find one strongly connected component, say $S$, of $G_{R,C}$ such that there are no edges from any other component into $S$, in \Log. Let us call $S$ undominated, in this case. (If $H$ is the strongly-connected-component-dag, (strongly connected components of $G_{R,C}$ are vertices of $H$. For $S_1$, $S_2 \in H$, $(S_1, S_2)$ is an edge if $\exists v_1 \in S_1, v_2 \in S_2$ such that $(v_1, v_2)$ is an edge in $G_{R,C}$), then $S$ is a source in $H$). 

Now, we find the triconnected components of $G_{R,C}$ and as discussed in Section~\ref{section:background}, this can be done in \Log. It is important to notice that while computing a triconnected component, we add a spurious edge between the vertices of a separating pair if it is not already present. These spurious edges pose three kinds of problems when we consider bipartite digraphs:
\begin{itemize}
\item How do we orient these edges?
\item How do we ensure bipartiteness in the resulting triconnected components? 
\item How do we make sure that no new domination is introduced? 
\end{itemize}

For handling the first problem, notice that, if there is a directed path from $u$ to $v$, (where $\{u,v\}$ is a separating pair) in some component $C_0$ formed after removing $\{u,v\}$ from the graph, then we can orient the edge as $(u,v)$ in every other component $C$ formed by removing the same separating pair. This will ensure that if we find a cycle in a triconnected component it always corresponds to a cycle in the original graph. However, in order to do this orientation, we do need to check for reachability in the corresponding directed graphs, which is in \NLog. For $K_{3,3}$-minor-free and $K_5$-minor-free graphs, we know from \cite{TW09} that this is in \ULog. 

The second problem is easy to solve: we just need to subdivide the spurious edge appropriately i.e. if both its endpoints are in the same partition then subdivide the edge by introducing a single vertex, else subdivide it by introducing two vertices. 

Observe that, in order to solve the second problem, if both endpoints are in different partitions, we do not need to subdivide the edges at all. But this may introduce new dominations. Subdividing spurious edges even if the endpoints are in different partitions takes care of the third problem. (Note that as we are neither removing any edges nor subdividing existing edges, existing domination relationships are preserved). 

We summarize the result of the above discussion in the following lemma:

\begin{lemma}\label{lemma:decomp3Conn}
Given a digraph $G$ whose underlying undirected graph is biconnected, we can obtain a triconnected-component-separating-pair tree $T$ with edges in triconnected components directed in such a way that every directed cycle present in a triconnected component corresponds to some directed cycle in $G$. This procedure is in \NLog. If $G$ is $K_{3,3}$-minor-free or $K_5$-minor-free, then this procedure is in \ULog. \\
\end{lemma}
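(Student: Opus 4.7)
The plan is to layer the orientation, bipartiteness-restoration, and domination-avoidance steps already flagged in the discussion on top of the existing \Log\ triconnected decomposition of \cite{DNTW09}. First I would feed the underlying undirected biconnected graph to that \Log\ algorithm to obtain the triconnected-component-separating-pair tree $T$ together with the list of virtual (spurious) edges inserted between separating pairs. Original edges inherit their orientation directly from $G$, so what remains is (i) to direct each virtual edge in a globally consistent way, and (ii) to certify that every directed cycle in a resulting triconnected component lifts back to a directed cycle of $G$.

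For step (i), take a virtual edge $\{u,v\}$ sitting inside a triconnected component $C$. The separating pair $\{u,v\}$ splits $G\setminus\{u,v\}$ into split components, and any such component $C_0\neq C$ serves as a candidate ``witness side.'' Query directed reachability inside $C_0\cup\{u,v\}$ using only original edges of $G$: if $u\to v$ is reachable, orient the virtual edge as $(u,v)$; if $v\to u$ is reachable, orient it as $(v,u)$. Strong connectivity of $G$ (Claim~1) forces at least one direction to be witnessed, and the excerpt's prescription of propagating the chosen orientation to every copy of this virtual edge at this separating pair yields a globally consistent labeling. Each query is directed $s$-$t$-reachability, which lies in \NLog\ in general and in \ULog\ when $G$ is $K_{3,3}$- or $K_5$-minor-free by \cite{TW09}; the polynomially many independent queries can be performed in parallel without leaving the same class.

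For step (ii), a directed cycle $\gamma$ in some triconnected component $C$ is unfolded by replacing each virtual edge $(u,v)$ on $\gamma$ with the witness directed path through the sibling $C_0$ that was used to orient it. Since distinct sibling split components meet only at their common separating pair and are otherwise vertex-disjoint, the concatenation is a simple directed cycle of $G$, establishing the required correspondence. Finally, to restore bipartiteness and to block the new dominations that a virtual edge could otherwise introduce, subdivide each oriented virtual edge by one intermediate vertex when its endpoints lie in the same bipartition class, and by two intermediate vertices otherwise, orienting the new edges along the chosen direction. This is a \Log\ refinement that only lengthens edges and hence preserves the cycle correspondence. The step I expect to be the main obstacle is verifying the global consistency of the orientation: one has to check that the ``fix the witnessed direction at every copy'' rule never creates a conflict at the same separating pair, and that strong connectivity really does guarantee a witness in at least one sibling for every virtual edge; once this is pinned down, the \NLog/\ULog\ complexity falls out immediately from the reachability step.
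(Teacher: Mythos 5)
Your proposal matches the paper's own argument essentially step for step: the \Log\ triconnected decomposition of \cite{DNTW09}, orienting each spurious edge $\{u,v\}$ by a directed reachability query through another split component (hence \NLog\ in general and \ULog\ for $K_{3,3}$- or $K_5$-minor-free graphs via \cite{TW09}), lifting cycles by substituting the witness path, and subdividing spurious edges by one or two vertices to restore bipartiteness without creating new dominations. The consistency/witness-existence issue you flag is treated no more formally in the paper, which presents the lemma as a summary of exactly this discussion.
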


Now, we prove the following lemma. 

\begin{lemma}\label{lemma:UICin3connPlan}
Let $G$ be a strongly connected orientation of a subdivision of a triconnected planar graph which is bipartite. Then at least one of the faces of $G$ is bounded by an undominated, induced (directed) cycle. 
\end{lemma}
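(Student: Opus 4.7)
The plan is to invoke Lemma~\ref{lemma:UCinPlan} to produce an undominated facial cycle, and then use the triconnectivity hypothesis to upgrade ``facial'' to ``induced facial.'' The hypotheses of Lemma~\ref{lemma:UCinPlan} are exactly that the graph is non-trivial, strongly connected, bipartite, and planar; our $G$ satisfies all of these since a subdivision of a planar graph is planar, and the remaining properties are assumed directly. So the lemma yields a face whose boundary $\mathcal{C}$ is an undominated (directed) cycle of $G$. It remains only to show that $\mathcal{C}$ is chord-free in $G$.

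For this, let $G_{0}$ be the triconnected planar graph of which $G$ is a subdivision, and fix a planar embedding of $G_{0}$ (the embedding of $G$ is inherited from it). A classical theorem of Whitney/Tutte asserts that in a $3$-connected planar graph, every facial cycle is both induced and non-separating. The face of $G$ bounded by $\mathcal{C}$ corresponds to a face of $G_{0}$ whose boundary is a cycle $\mathcal{C}_{0}$, and $\mathcal{C}$ is obtained from $\mathcal{C}_{0}$ by subdividing some of its edges; in particular $\mathcal{C}_{0}$ is an induced cycle in $G_{0}$. Every vertex of $V(G) \setminus V(G_{0})$ has degree $2$ in $G$, and its two incident edges both lie along the single original edge of $G_{0}$ that was subdivided to produce it.

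Now suppose for contradiction that $\mathcal{C}$ has a chord $e = \{x,y\}$ in $G$, i.e.\ an edge of $G$ joining two vertices of $\mathcal{C}$ that is not itself on $\mathcal{C}$. If either endpoint $x$ or $y$ were a subdivision vertex, it would have degree at least $3$ in $G$ (two edges on $\mathcal{C}$ plus the chord $e$), contradicting that subdivision vertices have degree exactly $2$. Hence $x, y \in V(G_{0})$, and since subdivision only replaces edges by internally-disjoint paths, $e$ itself must already be an edge of $G_{0}$. But then $e$ is a chord of $\mathcal{C}_{0}$ in $G_{0}$, contradicting inducedness of $\mathcal{C}_{0}$. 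Therefore $\mathcal{C}$ is induced, completing the proof.

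The argument is short and the only real content beyond Lemma~\ref{lemma:UCinPlan} is the Whitney/Tutte fact together with the easy observation that subdivision cannot create chords on a face boundary (because the new vertices have degree $2$). The main ``obstacle,'' if any, is simply to be careful about what the facial cycle of the subdivision looks like relative to the facial cycle of the underlying triconnected graph; once the degree-$2$ observation is in hand, the inducedness transfers immediately.
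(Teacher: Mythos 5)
Your proof is correct and follows essentially the same route as the paper: invoke Lemma~\ref{lemma:UCinPlan} to get an undominated facial cycle, then transfer inducedness of facial cycles from the triconnected planar graph to its subdivision. The only difference is that you spell out the degree-$2$ argument showing subdivision cannot create chords, a detail the paper leaves implicit with ``the same follows for any subdivision thereof.''
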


\begin{proof}
Using Lemma~\ref{lemma:UCinPlan} we know there exists an undominated, facial (directed) cycle. Since in a triconnected graph every facial cycle is induced, the same follows for any subdivision thereof. 
\end{proof}

We need to prove the following lemmas for non-planar triconnected components. We state the lemmas here and prove them in subsequent sections. 

\begin{lemma}\label{lemma:UICinK5} 
Every strongly-connected bipartite subdivision of $K_5$ has an undominated induced cycle. 
\end{lemma}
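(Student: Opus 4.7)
I will begin by contracting each subdivision path of the underlying $K_5$ back to a single directed edge. Because every internal vertex of such a path has degree~$2$, strong connectivity forces in-degree and out-degree~$1$, so each path inherits a consistent orientation, and contraction yields a strong tournament $T$ on the five original $K_5$-vertices. The subdivision being bipartite $2$-colours these five vertices, and an original $K_5$-edge can be \emph{direct} (length one, i.e., unsubdivided) only if its endpoints lie in different colour classes, since a same-class path must have even length. Two observations will govern the dominator structure throughout: a subdivision vertex has degree~$2$ and hence at most one out-neighbour, so only the five original vertices can ever dominate a cycle of the subdivision; and for an original vertex $v_m$ to dominate a cycle $C$, it must possess at least two \emph{direct} out-edges into $C$.

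The plan is then to case-split on the size $k\in\{0,1,2\}$ of the smaller colour class (after possibly swapping classes). For $k=0$, every $K_5$-edge is subdivided, so no direct edges exist and any directed triangle of $T$---which exists by Camion's theorem for strong tournaments---has undominated subdivision. For $k=1$ with singleton $v_\ast$, Moon's vertex-pancyclicity places $v_\ast$ on a directed triangle $T'$, and any candidate dominator has at most one direct edge into $T'$ (the possibly-direct edge it shares with $v_\ast$), because the other two triangle edges lie entirely inside the four-vertex colour class and must therefore be subdivided.

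The substantive work is in the case $k=2$; write the small class as $\{v_4,v_5\}$ and the large class as $\{v_1,v_2,v_3\}$, so that the potentially direct edges form a subgraph of $K_{\{v_1,v_2,v_3\},\{v_4,v_5\}}$. The main tool will be the three \emph{alternating} $4$-cycles $v_iv_4v_jv_5$ with $\{i,j\}\subset\{1,2,3\}$: each is induced in the subdivision (its two chords $v_iv_j$ and $v_4v_5$ are within-class, hence subdivided), and its total length is automatically even (a sum of four odd between-class path lengths). The only candidate dominator is the missing third vertex $v_k$, and it dominates exactly when it lies in $X=\{v\in\{v_1,v_2,v_3\}:\ vv_4 \text{ and } vv_5 \text{ are both direct and oriented away from } v\}$. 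Strong connectivity rules out $|X|=3$, since that would make all six between-class edges direct and oriented from the large class to $\{v_4,v_5\}$, disconnecting $\{v_4,v_5\}$ from the large class. So whenever some alternating $4$-cycle is directed in $T$ and its opposite vertex lies outside $X$, we are done.

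The main obstacle is the residual subcase of $k=2$, in which no directed alternating $4$-cycle has its opposite vertex outside $X$. A short tournament computation shows the $4$-cycle $v_iv_4v_jv_5$ is directed precisely when one of $v_i,v_j$ has ``type'' $(+,-)$ and the other has type $(-,+)$ relative to $(v_4,v_5)$ (where $+$ means ``beats'' and $-$ means ``is beaten by''). The absence of any directed alternating $4$-cycle therefore forces $n_{(+,-)}=0$ or $n_{(-,+)}=0$; by symmetry assume $n_{(-,+)}=0$. A bounded case analysis on the remaining type counts $(n_{(+,+)},n_{(+,-)},n_{(-,-)})$, constrained by strong connectivity (which, for instance, forces $v_4\to v_5$ when no large-class vertex has an edge into $v_5$), should then exhibit in every configuration an undominated directed triangle---of the form $\{v_i,v_j,v_4\}$ with $v_i$ of type $(+,+)$ and $v_j$ of type $(-,-)$ when both types are present, of the form $\{v_i,v_4,v_5\}$ with $v_i$ of type $(+,-)$ in the absence of $(+,+)$, or of the form $\{v_i,v_j,v_5\}$ in the remaining degenerate subcases---with the ``at most one direct edge into a triangle'' principle guaranteeing undomination.
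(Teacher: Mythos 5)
Your reduction to a strong tournament $T$ on the five branch vertices, together with the observations that only branch vertices can dominate a lifted cycle and that a dominator needs two \emph{direct} out-edges into it, is sound, and your cases $k=0$ and $k=1$ go through. The gap is in the heart of the $k=2$ case. You correctly define the residual subcase as ``no directed alternating $4$-cycle has its opposite vertex outside $X$,'' but you then analyse only the strictly smaller subcase in which \emph{no} alternating $4$-cycle is directed at all, deducing $n_{(+,-)}=0$ or $n_{(-,+)}=0$. The configuration in which the large-class types are $(+,-)$, $(-,+)$ and $(+,+)$, with the $(+,+)$ vertex having both of its edges to $v_4,v_5$ unsubdivided (so it lies in $X$), falls squarely into your residual subcase: the unique directed alternating $4$-cycle is dominated, yet neither type count vanishes, so your split never reaches it. This configuration is realizable as a strongly connected bipartite subdivision (subdivide each within-class edge by one vertex of the opposite class), and while an undominated induced triangle does exist in it, your argument does not produce one. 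Compounding this, the final step is only asserted (``should then exhibit''): the triangles you name are not guaranteed to be directed, because they hinge on within-class edges whose orientation is unconstrained by the types --- e.g.\ $\{v_i,v_j,v_4\}$ with $v_i$ of type $(+,+)$ and $v_j$ of type $(-,-)$ needs the subdivided edge between $v_i$ and $v_j$ to point from $v_j$ to $v_i$, and $\{v_i,v_4,v_5\}$ needs a particular orientation of the subdivided edge $v_4v_5$. So the exhaustive verification that constitutes the substantive content of the $k=2$ case is missing, and the one reduction you do state is a non sequitur.

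For comparison, the paper sidesteps this case analysis entirely: Lemma~\ref{lemma:UICinK5minuse} produces an edge $e=(w_1,w_2)$ of the oriented $K_5$ whose removal keeps the graph strongly connected and whose endpoints either lie in the same colour class or have $w_2$ as the sole vertex of its class; since $K_5$ minus an edge is planar, Lemma~\ref{lemma:mainInAOV} yields an undominated induced cycle there, and the colour condition on $e$ guarantees that restoring $e$ creates no new domination. If you want to salvage your route, you must either treat the dominated-directed-$4$-cycle configuration separately or reorganize the $k=2$ analysis so that the orientations of the four subdivided within-class edges (and which cross edges are actually direct) are explicitly part of the case split.
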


\begin{lemma}\label{lemma:UICinV8} 
Every strongly-connected bipartite subdivision of $V_8$ has an undominated induced cycle. 
\end{lemma}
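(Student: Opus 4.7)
The plan is to find an undominated induced directed cycle in any strong orientation of $V_8$ and then lift it through the subdivision. The lift is routine: every subdivision vertex has in- and out-degree exactly $1$, so subdivision vertices can never dominate any cycle, and induced cycles of a bipartite subdivision of $V_8$ correspond bijectively to induced cycles of $V_8$.

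First I would catalogue the induced cycles of $V_8$. Since every pair of non-adjacent vertices in $V_8$ has a common neighbour, removing any such pair leaves some vertex of degree at most $1$ in the induced subgraph; this rules out induced $6$-cycles. Removing a single vertex leaves $9$ edges on $7$ vertices, ruling out induced $7$-cycles. Hence (with vertex indices mod $8$) the induced cycles of $V_8$ are exactly the four ``rung $4$-cycles''
\[
C_i \;=\; v_i v_{i+1} v_{i+5} v_{i+4}, \qquad i \in \{0,1,2,3\},
\]
and the eight induced $5$-cycles $C_j^{\pm}$ ($j \in \{0,1,2,3\}$), each consisting of a rung $r_j = v_j v_{j+4}$ together with one of the two length-$4$ arcs of the outer $8$-cycle between $v_j$ and $v_{j+4}$. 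Crucially, every vertex outside a rung $4$-cycle $C_i$ has at most one edge into $C_i$, so any directed $C_i$ is automatically undominated.

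The argument then splits by cases. If some $C_i$ is a directed cycle, we are done. Otherwise I argue, by a case analysis on the direction-changes of the outer $8$-cycle, that the $8$-cycle itself must be directed. Each direction-change on the $8$-cycle creates a source or sink whose unique incident rung must be oriented in a specific way to restore strong connectivity; this either makes some $C_i$ directed, contradicting the hypothesis, or places conflicting demands on the two endpoints of some rung (for example, in the alternating pattern every rung connects two vertices that are both sources on the $8$-cycle or both sinks, so no rung can simultaneously supply both its endpoints with their missing in- or out-edge, precluding strong connectivity). With the $8$-cycle directed---without loss of generality as $v_0 \to v_1 \to \cdots \to v_7 \to v_0$---each orientation of $r_j$ makes exactly one of $C_j^{+}, C_j^{-}$ a directed $5$-cycle; the only off-cycle vertices with two undirected edges to such a $5$-cycle are $v_{j+1}$ and $v_{j+3}$, and a $16$-case truth-table check on the rung orientations $(b_0, b_1, b_2, b_3) \in \{0,1\}^4$ shows that some $C_j^{\pm}$ is always undominated.

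The main obstacle is the direction-change case analysis in the previous paragraph. Each individual case is routine, but tracking all arc-length partitions of $8$, the implied source/sink placements, and the forced rung orientations requires careful bookkeeping. A conceptually cleaner alternative---perhaps a discharging or Euler-formula argument on a projective-plane embedding of $V_8$, analogous to the proof of Lemma~\ref{lemma:UCinPlan}---would be preferable but does not seem immediately available.
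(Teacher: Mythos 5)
Your route is genuinely different from the paper's, and its two endpoints are sound: the lift through the subdivision is fine (subdivision vertices have out-degree one, so they cannot dominate, and an undominated induced cycle of $V_8$ lifts to one of $G$), the observation that a directed rung $4$-cycle is automatically induced and undominated matches the paper's Claim~4, and your final truth-table step is correct --- with the outer cycle directed, the directed $5$-cycle through a rung is dominated only by one specific off-cycle vertex via one specific rung, and ``every directed $5$-cycle is dominated'' forces the four rung bits to satisfy $b_1\neq b_0$, $b_2\neq b_1$, $b_3\neq b_2$ and $b_0=b_3$, which is impossible. (Minor slips: for the $5$-cycle on $v_j,\dots,v_{j+4}$ the off-cycle vertices with two undirected edges to it are $v_{j+5}$ and $v_{j+7}$, not $v_{j+1},v_{j+3}$; and your exclusion of induced $6$-cycles only treats non-adjacent deleted pairs, the adjacent case being settled by counting the seven remaining edges.) The unfinished piece, however, is the heart of your proof: the claim that a strongly connected orientation of $V_8$ with no directed rung square must have a coherently directed outer $8$-cycle. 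The claim does appear to be true (for instance, all four arc splits $(1,7),(2,6),(3,5),(4,4)$ with one local source and one local sink force a directed rung square, the $(3,5)$ case only after propagating two further rung orientations), but verifying it requires the full census over source/sink placements that you explicitly defer, and in several configurations the contradiction is not immediate from the forced rungs alone. The paper sidesteps this entirely: it takes any cycle of the strongly connected $V_8$, reduces $6$-, $7$- and $8$-cycles to $4$- or $5$-cycles, and then, assuming the $5$-cycle $(a,b,c,d,e,a)$ is dominated, notes that only $f$ or $h$ can dominate it and lets strong connectivity plus the absence of $4$-cycles force the orientation to a contradiction in two short cases. So the paper's argument is considerably shorter, while your parity finish is arguably cleaner once the directed-outer-cycle claim is established; to make yours a complete proof you must either carry out that census or replace it by the paper's dominated-$5$-cycle forcing argument.
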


Finally, we need to stitch together various cycles across triconnected components. 

\begin{lemma}\label{lemma:cycleStitch}
Given the triconnected-component-separating-pair tree $T$ of (the underlying undirected graph of) a strongly connected bipartite graph $G$, it is possible to find an undominated induced cycle in the original graph $G$ in \Log. 
\end{lemma}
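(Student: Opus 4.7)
The plan is to work bottom-up on the triconnected-component-separating-pair tree $T$, lifting an undominated induced cycle from a single triconnected component up to one in all of $G$. I would root $T$ at an arbitrary node and focus on a leaf: a triconnected component $L$ that shares exactly one separating pair $\{u,v\}$ with the remainder of $G$. By Lemma~\ref{lemma:UICin3connPlan}, \ref{lemma:UICinK5}, or \ref{lemma:UICinV8} (depending on whether $L$ is a subdivision of a triconnected planar graph, $K_5$, or $V_8$), $L$ already contains an undominated induced directed cycle $C$, and the task reduces to promoting some such $C$ to an undominated induced cycle in $G$.

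Stitching is then a case analysis on how $C$ meets the spurious subdivided edge between $u$ and $v$. If $C$ avoids this spurious subdivided edge and uses at most one of $u$ and $v$, then $C$ is already a directed cycle in $G$; since $\{u,v\}$ separates $L$ from the rest of the graph, every vertex outside $L$ has at most one neighbour on $C$, and combined with undomination inside $L$ we obtain undomination and inducedness of $C$ in $G$ for free. If $C$ uses the spurious subdivided edge, then the orientation furnished by Lemma~\ref{lemma:decomp3Conn} guarantees an honest directed $u\to v$ (or $v\to u$) path in the complementary side of the separation, which can be read off by walking through the complementary subtree of $T$; I would splice this path in place of the spurious subdivided edge. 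The residual case, in which $C$ passes through both $u$ and $v$ but not through the spurious edge, is reduced to the splicing case by exploiting the $3$-connectivity of $L$ to reroute $C$ through the spurious edge.

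The hard part will be precisely this residual case: rerouting $C$ through the spurious edge can a priori introduce chords or fresh dominating vertices, and conversely an outside vertex adjacent to both $u$ and $v$ may dominate a cycle through both separating-pair vertices. I expect that handling it requires either choosing $C$ carefully among the undominated induced cycles of $L$ (for instance, preferring a facial cycle in the planar case that lies on one ``side'' of the spurious edge), or performing a local exchange argument whose correctness leans on triconnectedness and on the absence of chords in the triconnected component.

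For the complexity bound, each triconnected component is either of constant size ($K_5$, $V_8$) or a subdivision of a triconnected planar graph, and in both cases an undominated induced cycle can be located in \Log\ (via Lemma~\ref{lemma:UICin3connPlan} together with the planar techniques of \cite{AOV07}, and by brute force otherwise). Since $T$ has polynomial size and comes pre-oriented by Lemma~\ref{lemma:decomp3Conn}, identifying the leaf, extracting the cycle, and tracing spliced $u\to v$ paths along the separating pairs of $T$ reduce to logspace tree-walks and bounded-depth reachability checks on the already-oriented decomposition, which keeps the overall stitching procedure in \Log.
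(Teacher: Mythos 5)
Your overall plan (decompose, find an undominated induced cycle in each triconnected component, lift it to $G$) matches the paper's in spirit, but the one step you work out in detail is incorrect and the two difficulties you flag are left unresolved, and they are precisely where the content of the lemma lies. In your splicing case you replace the spurious (virtual) edge used by the leaf's cycle with an arbitrary honest directed $u\to v$ path read off from the other side of the separation. That does give a directed cycle of $G$, but nothing makes it induced or undominated: the path can carry chords, vertices of the other components can send two edges into it, and a vertex adjacent to both $u$ and $v$ can dominate the spliced cycle. The paper never splices raw paths; Lemma~\ref{lemma:twoCycleStitch} only merges two cycles that are each undominated and induced \emph{within their own component} and that traverse the virtual edge $\{u,v\}$ in opposite directions, so that dropping the two copies of the virtual edge yields a cycle inheriting both properties. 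Your ``residual case'' (a cycle through both $u$ and $v$ avoiding the spurious edge) and the outside dominator of $\{u,v\}$ are explicitly deferred (``I expect that handling it requires\dots''), so the proposal does not constitute a proof of the statement.

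For comparison, the paper's resolution is a global pruning of $T$ rather than a local exchange at a leaf. For every separating pair $S=\{u,v\}$ dominated by a vertex $w$ inside some component $C_j$, it detaches all other components from $S$: any cycle through both $u$ and $v$ located elsewhere would be dominated by $w$ in $G$, while the undominated cycles of $C_j$ avoid $u,v$ and hence cannot be dominated from outside. It then fixes one undominated induced cycle per surviving component, keeps a tree edge $(C,S)$ only if the chosen cycle of $C$ passes through $S$, trims every pair node to degree at most two, observes that the resulting trees are paths, and stitches the chosen cycles along such a path by repeated application of Lemma~\ref{lemma:twoCycleStitch}; all steps are \Log-transducer computations. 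To salvage your leaf-based version you would need analogues of exactly these steps: a guarantee that when the leaf's cycle must use the pair, the neighbouring component also has an undominated induced cycle through the same virtual edge with the opposite orientation, and the argument that pairs dominated from inside some component can be cut off as above.
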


Thus, we obtain our main result: 

\begin{theorem}\label{theorem:main}
There is an \NLog\ (and therefore polynomial time) procedure for finding a Nash equilibrium in a two-player win-lose game which belongs to one of the following classes:
\begin{enumerate}
\item $K_{3,3}$ minor-free games, 
\item $K_5$-minor free games where the triconnected components are either planar or $V_8$, 
\item Games whose triconnected components are $K_5$, $V_8$ or planar. 
\end{enumerate}
For classes (1) and (2), a Nash equilibrium can, in fact, be computed in \ULog.
\end{theorem}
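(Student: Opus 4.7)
The plan is to assemble Theorem~\ref{theorem:main} from the machinery developed in the preceding lemmas. First, I would invoke Claim 1 together with the logspace preprocessing outlined earlier in this section to reduce to the case in which $G_{R,C}$ is bipartite, strongly connected, and free of digons; detecting a digon (which already yields a pure equilibrium) and isolating a source strong component each cost only \Log. Since directed cycles cannot cross cut vertices, I can then restrict attention to a single biconnected component, computed in \Log.

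Next, I would apply Lemma~\ref{lemma:decomp3Conn} to produce the triconnected-component-separating-pair tree $T$ together with orientations of the spurious edges and the bipartite-preserving subdivisions. For classes (1) and (2) this step lies in \ULog, since the reachability queries needed to orient the spurious edges are in \ULog\ for $K_{3,3}$- and $K_5$-minor-free graphs by \cite{TW09}; for class (3) the best general bound is \NLog. By Lemmas~\ref{lemma:K33_tricon} and \ref{lemma:K5_tricon} together with the class hypotheses, every node of $T$ is now a strongly connected bipartite subdivision of a triconnected planar graph, of $K_5$, or of $V_8$. For each such component I would invoke the appropriate existence-and-search lemma: Lemma~\ref{lemma:UICin3connPlan} for the planar components (which further reduces to undominated facial-cycle search in \ULog), Lemma~\ref{lemma:UICinK5} for $K_5$, and Lemma~\ref{lemma:UICinV8} for $V_8$. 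The latter two operate on constant-size graphs (modulo their subdivision paths) and therefore contribute no additional complexity. Lemma~\ref{lemma:cycleStitch} then glues these local cycles into a single undominated induced cycle $\mathcal{C}$ of $G_{R,C}$ in \Log, and Lemma~\ref{lemma:UICNash} promotes $\mathcal{C}$ to a uniform Nash equilibrium.

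The dominant cost across the pipeline is the decomposition-plus-orientation step, so the overall procedure runs in \ULog\ for classes (1) and (2) and in \NLog\ for class (3); as \ULog\ $\subseteq$ \NLog\ $\subseteq$ \Ptime, polynomial-time solvability follows in all three cases. The main obstacle I would expect here is bookkeeping rather than conceptual: verifying that the orientations of spurious edges, combined with the bipartite-preserving subdivisions, (i) never fabricate new dominations, (ii) lift every local directed cycle to a genuine directed cycle in $G_{R,C}$, and (iii) respect the \ULog/\NLog\ bounds uniformly across components of different types. These concerns are precisely what Lemmas~\ref{lemma:decomp3Conn} and \ref{lemma:cycleStitch} are engineered to discharge, so once they are in hand the theorem reduces to composing the steps above.
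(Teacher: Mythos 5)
Your proposal is correct and follows essentially the same route as the paper: Claim~1 for preprocessing, Lemma~\ref{lemma:decomp3Conn} for the triconnected decomposition with carefully oriented and subdivided spurious edges, Lemmas~\ref{lemma:K33_tricon} and \ref{lemma:K5_tricon} to classify the resulting components, Lemmas~\ref{lemma:UICin3connPlan}, \ref{lemma:UICinK5} and \ref{lemma:UICinV8} to produce local undominated induced cycles, Lemma~\ref{lemma:cycleStitch} to glue them, and Lemma~\ref{lemma:UICNash} to convert the result into a Nash equilibrium, with the \ULog/\NLog\ split driven exactly by the reachability bound from \cite{TW09}. The only cosmetic quibble is the phrase that ``directed cycles cannot cross cut vertices'' --- a cycle can certainly pass through a cut vertex; what you mean (and what the paper says) is that a cycle cannot span two biconnected components, which is the property actually used.
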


\begin{proof} We are done by invoking the above Lemmata. Apart from Lemmata~\ref{lemma:decomp3Conn}, \ref{lemma:UICin3connPlan}, \ref{lemma:cycleStitch} and \ref{lemma:UICNash}, we invoke \ref{lemma:UICinK5} and  \ref{lemma:K33_tricon} for the proof for class (1), \ref{lemma:UICinV8} and \ref{lemma:K5_tricon} for (2) and for (3), \ref{lemma:UICinK5}, \ref{lemma:UICinV8} and the observation that undominated induced cycles in $K_5$'s and $V_8$'s can be ``stitched'' together in \Log\ as well.  
\end{proof}

\begin{remark} 
Class (3) in Theorem~\ref{theorem:main} above includes games that are neither $K_{3,3}$-minor-free nor $K_5$-minor-free, and is a superclass of classes (1) and (2) (and of planar games) as well. 
\end{remark}


\subsection{Stitching cycles together}

Before we complete the proof of Lemma~\ref{lemma:cycleStitch}, we prove the following lemma:

\begin{lemma}\label{lemma:twoCycleStitch} Given a graph $G$ with two triconnected components $C_1$ and $C_2$ which share a separating pair $S = \{u,v\}$ and suppose $O_i \in C_i$ (for $i \in \{1,2\}$) are two undominated induced cycles both passing through the (undirected) edge $\{u,v\}$ but in opposite directions then the cycles can be ``stitched'' together to obtain an undominated induced cycle in $G$. 
\end{lemma}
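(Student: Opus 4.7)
The plan is to construct the stitched cycle $O$ in $G$ by deleting from each $O_i$ the portion that realises the (possibly subdivided) edge between $u$ and $v$ in $C_i$, and gluing the two remaining directed $u$--$v$ paths at their shared endpoints. Concretely, write $O_i = \pi_i \cup P_i$, where $\pi_i$ is the directed $u$--$v$ piece of $O_i$ corresponding to the edge $\{u,v\}$ of $C_i$ (a single arc, or the short path produced when subdividing for bipartiteness as in Lemma~\ref{lemma:decomp3Conn}), and $P_i$ is the complementary directed $u$--$v$ path of $O_i$. The hypothesis that $O_1$ and $O_2$ traverse $\{u,v\}$ in opposite directions forces $\pi_1$ and $\pi_2$ to be oppositely oriented, hence $P_1$ and $P_2$ also; so $O := P_1 \cup P_2$ is a closed directed walk in $G$.

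First I would verify that $O$ is a simple cycle living entirely in $G$. The internal vertices of $P_1$ lie in $C_1 \setminus \{u,v\}$ and those of $P_2$ in $C_2 \setminus \{u,v\}$; since $\{u,v\}$ is a separating pair of $G$, these two vertex sets are disjoint, so $O$ is simple with $u$ and $v$ as its only ``junction'' vertices. Moreover, the auxiliary subdivision vertices introduced by Lemma~\ref{lemma:decomp3Conn} live only on $\pi_1$ and $\pi_2$, which are discarded, so $V(O) \subseteq V(G)$ and every edge of $O$ is a genuine edge of $G$.

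Next I would check that $O$ is induced and undominated in $G$, in each case reducing to the corresponding property of $O_i$. By the separating-pair property, every edge of $G$ with at least one endpoint outside $\{u,v\}$ lies inside a single $C_i$. Hence any would-be chord of $O$ joins two non-consecutive vertices of the same $P_i$; and because the only adjacencies present in $O_i$ but not in $P_i$ run along the (discarded) $\pi_i$ whose internal vertices are not in $G$, that same edge would be a chord of $O_i$, contradicting $O_i$ induced. The undomination argument is parallel: a vertex $w \notin O$ with two out-edges to $O$ must, by the separating-pair property, have both edges land in the same $P_i$; but then $w$ lies in $C_i \setminus O_i$ (it is neither on $P_i$ nor an internal vertex of $\pi_i$, which does not exist in $G$) and has two out-edges to $O_i$, contradicting $O_i$ undominated.

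The step I expect to demand the most care is the bookkeeping around $\pi_i$: one must confirm that subdivision vertices added purely for bipartiteness are never mistaken for vertices of $G$, and that the corner case where $\{u,v\}$ happens to be a real edge of $G$ neither introduces a forbidden digon nor creates an unwanted chord of $O$. The digon worry is handled by the preprocessing of Claim~1, and the putative chord $\{u,v\}$ is handled by noting that if it is used at all it is used along $\pi_1$ and $\pi_2$, and hence is discarded in passing from the $O_i$ to $O$.
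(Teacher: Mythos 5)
Your argument is correct and essentially identical to the paper's (much terser) proof: drop the oriented copies of the $\{u,v\}$ edge, concatenate the two oppositely directed $u$--$v$ paths, and transfer inducedness and undomination from $O_1,O_2$ using the fact that no edge of $G$ joins $C_1\setminus\{u,v\}$ to $C_2\setminus\{u,v\}$. One small caution: your closing remark does not really dispose of a genuine directed edge of $G$ between $u$ and $v$ (discarding it from the cycle does not remove it from $G$, so it would still be a chord of $O$), but the paper's proof silently ignores this case as well, so your treatment is no weaker than theirs.
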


\begin{proof}
It is easy to see that dropping the oriented copies of $S$, we get a directed cycle $O$ consisting of all other directed edges in $O_1$ and $O_2$. $O$ is an undominated induced cycle as there are no edges between vertices in $C_1$ and $C_2$, and $O_i$ is already undominated and induced in $C_i$ (for $i \in \{1,2\}$). 
\end{proof}

Now, we prove Lemma~\ref{lemma:cycleStitch}. 
\begin{proof}(of Lemma~\ref{lemma:cycleStitch})
We first modify the triconnected-component-separating-pair tree $T$ as obtained from Lemma~\ref{lemma:decomp3Conn}, as follows. 

Consider a separating pair $S = \{u, v\}$ and the corresponding triconnected components, say, $C_1, C_2, \ldots, C_i$. If component $C_j$ ($1 \leq j \leq i$) has a vertex $w$ that dominates $S$ (that is, $G$ has edges from $w$ to both $u$ and $v$), then we remove edges (in $T$) between $S$ and $C_k$, $\forall k \neq j$. If more than one component contains such vertices that dominate $S$, we pick one component arbitrarily. If none of them contain vertices that dominate $S$, then we leave the edges between $S$ and $C_k$ ($1 \leq k \leq i$) untouched. Repeat the above for all separating pairs. 

Now, we pick a tree, say $T_1$ from the resulting forest. It suffices to find an undominated induced cycle in $T_1$ because of the following. If $C_j$ ($1 \leq j \leq i$) has a vertex $w$ that dominates $S$, then undominated cycle(s) in $C_j$ do not pass through $u$ and $v$. Hence, no vertex outside $T_1$ can dominate any cycle within $T_1$. 

Notice that we can stitch together only two cycles at a separating pair. Therefore, we further modify $T_1$ as follows. We arbitrarily pick one undominated induced cycle per component in $T_1$. We keep the edge $(C,S)$ in $T_1$ if and only if the picked cycle in triconnected component $C$ passes through the separating pair $S$. If an $S$ node has just one  edge incident on it, then remove it. If it has at least two bidirected edges incident on it, keep exactly two of them and remove the other edges. Note that vertices from components we have ``disconnected'' from $S$ cannot dominate (in $G$) cycles in the components we have retained. This is because each ``disconnected'' component itself has an undominated cycle passing though $S$. 

The trees in the resulting forest are just paths. Pick one of these paths, say $P$. If $P$ has just one component $C$, we are done as any undominated induced cycle in $C$ is an undominated induced cycle in $G$ as well. On the other hand, if $P$ has two or more components, we repeatedly invoke Lemma~\ref{lemma:twoCycleStitch} to stitch together undominated induced cycles across all these components, thereby obtaining an undominated induced cycle in $G$.  

Since the steps of the above procedure can be performed by a \Log-transducer, we have completed the proof of the lemma. 
\end{proof}


\subsection{Undominated Induced Cycle in $K_5$}

We prove Lemma~\ref{lemma:UICinK5}, that is, every strongly connected bipartite (oriented) subdivision of $K_5$ has an undominated induced cycle. The key idea is that removing any one edge of a $K_5$ makes it planar. 

\begin{proof} (of Lemma~\ref{lemma:UICinK5}) 
Let $G$ be a strongly connected bipartite (oriented) subdivision of $K_5$. Using Lemma~\ref{lemma:UICinK5minuse} below, there exists an edge (or a subdivided edge) $e = (w_1, w_2)$ such that $G \setminus e$ is strongly connected. As $G \setminus e$ is also planar, by the result of Addario-Berry, Olver and Vetta \cite{AOV07} (Lemma~\ref{lemma:mainInAOV} above), it has an undominated induced cycle, say $U$. Notice that $U$ continues to be induced in $G$. We also prove in Lemma~\ref{lemma:UICinK5minuse} below that adding back $e = (w_1, w_2)$ introduces no new domination(s) in $G$, because either $w_1$ and $w_2$ belong to the same color class ($R$ or $C$) or $w_2$ is the only vertex in its color class. It follows that $U$ is undominated in $G$ as well. 
\end{proof}


\begin{lemma}\label{lemma:UICinK5minuse} 
Let $G = (V, E)$ be an orientation of $K_5$ that is strongly connected and such that each $w \in V$ belongs to one of the color classes $R$ or $C$. There exists an edge $e \in E$, $e = (w_1, w_2)$ such that $G' = G \setminus e$ is strongly connected and such that one of the following holds: 
\begin{enumerate}
\item $w_1$ and $w_2$ belong to the same color class. 
\item One of the color classes (say, $C$) has exactly one vertex and $e$ is an incoming edge to that vertex. That is, $C = \{w_2\}$. 
\end{enumerate}
\end{lemma}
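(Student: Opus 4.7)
The plan rests on a classical result --- Camion's theorem, which asserts that every strongly connected tournament contains a directed Hamilton cycle. Applying it to $G$ (a strong orientation of $K_5$, hence a five-vertex strong tournament) produces a directed Hamilton cycle $H$. The key structural observation is then: for any edge $f \in E$ not on $H$ (a ``chord''), the subgraph $G \setminus f$ still contains the directed cycle $H$ and so is strongly connected. Hence \emph{every chord is a non-strong-bridge}, and since $|E(K_5)| = 10$ and $|E(H)| = 5$, there are exactly five chord edges to choose from. It therefore suffices to exhibit a chord satisfying condition (1) of the lemma (condition (2) will turn out to be unnecessary for this strategy).

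I would finish with a short case analysis on $(r,c) := (|R|, |C|)$, where $r + c = 5$. If $\min(r,c) = 0$, all edges are monochromatic and any chord works. If $\min(r,c) = 1$, say $C = \{w_2\}$, then the two $H$-edges incident to $w_2$ are bichromatic while the other three are $R$-$R$; out of the $\binom{4}{2} = 6$ monochromatic edges of $K_5$, three lie on $H$, leaving $3$ monochromatic chords. If $\min(r,c) \ge 2$, then $\{r,c\} = \{2,3\}$ and $K_5$ has $\binom{2}{2} + \binom{3}{2} = 4$ monochromatic edges; the number of colour-changes around $H$ is even (since the cycle closes on its starting colour) and nonzero (both classes are present), so it is $2$ or $4$, giving $3$ or $1$ monochromatic edges on $H$ and thus at least one monochromatic chord. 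In every case a monochromatic chord $e$ is produced; its removal leaves $G$ strongly connected (since $H \subseteq G \setminus e$), and its endpoints lie in the same colour class, so $e$ satisfies condition (1).

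The main obstacle is essentially just invoking Camion's theorem correctly; after that the rest is a two-line parity-and-counting argument. The one subtle point to verify is that $H$ is genuinely a \emph{directed} Hamilton cycle --- it is, by Camion's theorem --- so that the containment $G \setminus f \supseteq H$ immediately yields strong connectivity, not merely connectivity of the underlying graph. A pleasant by-product is that condition (2) in the lemma statement is redundant for this approach: condition (1) alone can always be satisfied, even in the singleton-class case, because the three $R$-$R$ edges on $H$ use up only half of the $\binom{4}{2}=6$ monochromatic edges of $K_5$, leaving plenty of monochromatic chord room.
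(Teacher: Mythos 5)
Your proof is correct, but it takes a genuinely different route from the paper's. The paper argues by direct case analysis on the orientation: it observes that if no edge were removable then deleting any edge would create a source or a sink, reduces to the two non-isomorphic orientations of the four-vertex subtournament obtained by removing a vertex (Figure 1), and, for each of the colour splits $(5,0)$, $(4,1)$, $(3,2)$, explicitly names a removable edge of the required colour type. You instead invoke Camion's theorem to get a directed Hamilton cycle $H$ in the strong $5$-vertex tournament $G$, note that deleting any of the five chords leaves the spanning directed cycle $H$ intact and hence preserves strong connectivity, and then find a monochromatic chord by counting: with split $(4,1)$ exactly three of the $\binom{4}{2}=6$ monochromatic edges lie on $H$, and with split $(3,2)$ the number of bichromatic edges of $H$ is even and nonzero, hence $2$ or $4$, so at least one of the four monochromatic edges is a chord. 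All three counts check out, and the appeal to Camion's theorem is legitimate (it could also be verified directly for five-vertex tournaments if one wished to keep the argument as self-contained as the paper's). Your approach is shorter, dispenses with the figure-based case analysis, and yields a slightly stronger conclusion: alternative (1) of the lemma can always be realised, so alternative (2) is never needed --- which is harmless, since the lemma is a disjunction and its use in Lemma~\ref{lemma:UICinK5} only requires that one of the two alternatives hold.
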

\begin{proof} Let $V = \{1, 2, 3, 4, 5\}$. Consider the color classes ($R$ or $C$) to which these vertices $1, 2, 3, 4, 5$ belong. It suffices to look at the number of vertices belonging to each class: 
\begin{enumerate}
\item[(I)] All the vertices belong to the same color class. Without loss of generality, let $1, 2, 3, 4, 5 \in R$. In this case, it suffices to prove that $\exists e = (w_1, w_2)\in E$ such that $G' = G \setminus e$ is strongly connected. Condition (i) of the theorem holds as $w_1$ and $w_2$ belong to the same color class, no matter what $e$ we choose. 
\item[(II)] $|R| = 4, |C| = 1$. Here, apart from proving that $G \setminus e$ is strongly connected, we need to prove that either $w_1, w_2 \in R$ or $C = \{w_2\}$. 
\item[(III)] $|R| = 3, |C| = 2$. We need to prove that $G \setminus e$ is strongly connected and $w_1, w_2 \in R$ or $w_1, w_2 \in C$.
\end{enumerate}

If there does not exist any edge whose removal ensures that the rest of the graph is strongly connected, then $\forall e = (u, v) \in E$, either $u$ is a source in $G \setminus e$ or $v$ is a sink. Without loss of generality, let $v$ be a sink. (The case when $u$ is a source is symmetric and a similar argument works). Consider the subgraph $G \setminus u$. As shown in Figure 1, only two non-isomorphic cases arise. (We denote the remaining vertices by $a, b, c$). 
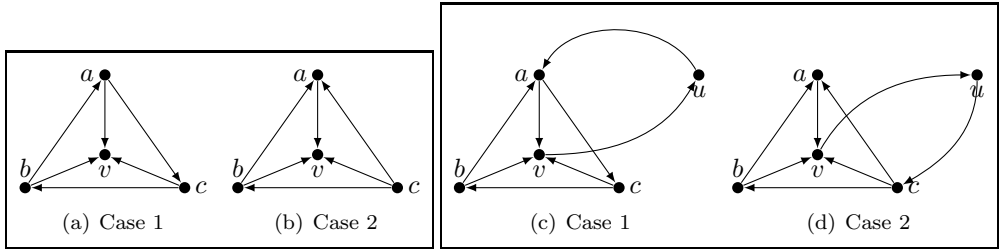
\begin{figure}
\begin{frame}{
\subfigure[Case 1]{
\begin{tikzpicture}[inner sep =1.5pt]
\node (A) at (0,0) [circle, fill, label=left:$a$]{};
\node (B) at (-1.0605,-1.5) [circle, fill,label=above:$b$]{};
\node (C) at (1.0605,-1.5) [circle, fill,label=right:$c$]{};
\node (V) at (0,-1.0605) [circle, fill,label=below:$v$]{};
\path[>=latex, ->] (B) edge (A); 
\path[>=latex, ->] (C) edge (B); 
\path[>=latex, ->] (A) edge (C); 
\path[>=latex, ->] (A) edge (V); 
\path[>=latex, ->] (B) edge (V); 
\path[>=latex, ->] (C) edge (V); 
\end{tikzpicture}
}
\subfigure[Case 2]{
\begin{tikzpicture}[inner sep =1.5pt]
\node (A) at (0,0) [circle, fill, label=left:$a$]{};
\node (B) at (-1.0605,-1.5) [circle, fill,label=above:$b$]{};
\node (C) at (1.0605,-1.5) [circle, fill,label=right:$c$]{};
\node (V) at (0,-1.0605) [circle, fill,label=below:$v$]{};
\path[>=latex, ->] (B) edge (A); 
\path[>=latex, ->] (C) edge (B); 
\path[>=latex, ->] (C) edge (A); 
\path[>=latex, ->] (A) edge (V); 
\path[>=latex, ->] (B) edge (V); 
\path[>=latex, ->] (C) edge (V); 
\end{tikzpicture}
}
}\end{frame}
\begin{frame}{
\subfigure[Case 1]{
\begin{tikzpicture}[inner sep =1.5pt]
\node (A) at (0,0) [circle, fill, label=left:$a$]{};
\node (B) at (-1.0605,-1.5) [circle, fill,label=above:$b$]{};
\node (C) at (1.0605,-1.5) [circle, fill,label=right:$c$]{};
\node (U) at (2.121,0) [circle, fill,label=below:$u$]{};
\node (V) at (0,-1.0605) [circle, fill,label=below:$v$]{};
\path[>=latex, ->] (B) edge (A); 
\path[>=latex, ->] (C) edge (B); 
\path[>=latex, ->] (A) edge (C); 
\path[>=latex, ->] (A) edge (V); 
\path[>=latex, ->] (B) edge (V); 
\path[>=latex, ->] (C) edge (V); 
\draw[>=latex, ->] (V) to [out=0,in=-120] (U); 
\draw[>=latex, ->] (U) to [out=120,in=60] (A); 
\end{tikzpicture}
}
\subfigure[Case 2]{
\begin{tikzpicture}[inner sep =1.5pt]
\node (A) at (0,0) [circle, fill, label=left:$a$]{};
\node (B) at (-1.0605,-1.5) [circle, fill,label=above:$b$]{};
\node (C) at (1.0605,-1.5) [circle, fill,label=right:$c$]{};
\node (U) at (2.121,0) [circle, fill,label=below:$u$]{};
\node (V) at (0,-1.0605) [circle, fill,label=below:$v$]{};
\path[>=latex, ->] (B) edge (A); 
\path[>=latex, ->] (C) edge (B); 
\path[>=latex, ->] (C) edge (A); 
\path[>=latex, ->] (A) edge (V); 
\path[>=latex, ->] (B) edge (V); 
\path[>=latex, ->] (C) edge (V); 
\draw[>=latex, ->] (V) to [out=60,in=180] (U); 
\draw[>=latex, ->] (U) to [out=-90,in=30] (C); 
\end{tikzpicture}
}
}\end{frame}
\caption{(a) and (b) show cases 1 and 2 in the proof for $K_5$-subdivision. These are the only two non-isomorphic orientations of $G \setminus u$ where $G$ is a strongly connected orientation of $K_5$ with vertices \{a,b,c,u,v\}. (c) and (d): After adding the vertex $u$ and some edges to Case 1 and Case 2 respectively.}
\end{figure}

Consider case 1. \\As G is strongly connected, $(v, u) \in E$ and at least one of $(u, a), (u, b), (u, c) \in E$. 
Without loss of generality, let $(u, a) \in E$. This is shown in Figure 1(c). 
Now, in case (I), independent of the orientation of the edges between $u$, $b$ and $u$, $c$, the graphs $G \setminus (a,v)$, $G \setminus (b,v)$ and $G \setminus (c,v)$ are strongly connected. 
In case (II), if $C = \{v\}$ or $C = \{u\}$, we may again choose $e$ to be either $(a,v)$, $(b,v)$ or $(c,v)$. If $C = \{a\}$, then we may drop $(b,v)$ or $(c,v)$ so that condition (i) is satisfied. Similarly for $C = \{b\}$ or $C = \{c\}$. 
In case (III), at least one of $a, b, c$ is in $R$. Say $a \in R$. If $v \in R$, then we can choose $e = (a, v)$. If $v \in C$ and one of $a, b, c \in C$ (say $b \in C$), we choose $e = (b, v)$. If $C = \{u, v\}$ then choose $e = (b, a)$. 

Now, consider case 2. Again as G is strongly connected, $(v, u) \in E$ and $(u, c) \in E$. Refer Figure 1(d).
It is easy to see that if we choose $e = (c, a)$ or $(b, a)$ or $(c, v)$ or $(b, v)$, $G \setminus e$ is strongly connected. The proof for case (I) follows rightaway. In cases (II) and (III) too, we are done as one of these pairs of vertices must belong to the same color class. 
\end{proof}


\subsection{Undominated Induced Cycle in $V_8$}\label{subsection:V8proof}

\begin{proof}(of Lemma~\ref{lemma:UICinV8}) 
Let $G = (V, E)$ be a counter example and let $G' = (V', E')$ be its underlying $V_8$. ($G'$ is obtained from $G$ by ignoring the subdividing vertices). \\
As $G$ is strongly connected, so is $G'$. On the other hand, though $G$ is bipartite, $G'$ is not. (It is easy to see that $G$ must have at least 3 sub-dividing vertices). Also, $G'$ has no undominated induced cycle (otherwise $G$ has one too).  
We claim the following, which are easy to check. (Refer Appendix~\ref{appendix:claim4} for an outline of the proof). \\

{\noindent}{\textbf{Claim 4. }} The following hold for cycles in $G'$ (and in $G$): 
\begin{enumerate}
\item All cycles in $G'$ (and hence in $G$) are of length $> 3$. 
\item $4$-cycles in $G'$ correspond to undominated induced cycles in $G$. 
\item $5$-cycles in $G'$ are induced but may be dominated (in both $G'$ and $G$). 
\item If $G'$ has a $6$-cycle, it has a $4$-cycle too. 
\item If $G'$ has a $7$-cycle, it either has a $4$-cycle or a $5$-cycle. 
\item If $G'$ has an $8$-cycle, it has a $5$-cycle too. \\
\end{enumerate}

\noindent As $G$ is a counter-example, because of Claim 4 above, $G'$ contains no $4$-cycle. 

$\implies$ $G'$ contains a $5$-cycle. \\
Without loss of generality, let this cycle be $(a, b, c, d, e, a)$. As $G$ is a counter example, this $5$-cycle is dominated (in both $G$ and $G'$). This cycle may either be dominated by $f$ or by $h$. This leads to two cases as shown in Figure 2. \\

\begin{figure}
\begin{frame}
{
\subfigure[Case 1]{
\begin{tikzpicture}[inner sep =1.5pt]
\node (A) at (0,0) [circle, fill, label=above:$a$]{};
\node (B) at (1,0) [circle, fill,label=above:$b$]{};
\node (C) at (1.707,-0.707) [circle, fill,label=right:$c$]{};
\node (D) at (1.707,-1.707) [circle, fill,label=right:$d$]{};
\node (E) at (1,-2.414) [circle, fill,label=below:$e$]{};
\node (F) at (0,-2.414) [circle, fill,label=below:$f$]{};
\node (G) at (-0.707,-1.707) [circle, fill,label=left:$g$]{};
\node (H) at (-0.707,-0.707) [circle, fill,label=left:$h$]{};
\path[>=latex, ->] (A) edge (B); 
\path[>=latex, ->] (B) edge (C); 
\path[>=latex, ->] (C) edge (D); 
\path[>=latex, ->] (D) edge (E); 
\path[>=latex, -] (F) edge (G); 
\path[>=latex, -] (G) edge (H); 
\path[>=latex, -] (H) edge (A); 
\path[>=latex, ->] (E) edge (A); 
\path[>=latex, -] (H) edge (D); 
\path[>=latex, -] (C) edge (G); 
\path[>=latex, ->] (F) edge (B); 
\path[>=latex, ->] (F) edge (E); 
\end{tikzpicture}
}
\subfigure[Case 2]{
\begin{tikzpicture}[inner sep =1.5pt]
\node (A) at (0,0) [circle, fill, label=above:$a$]{};
\node (B) at (1,0) [circle, fill,label=above:$b$]{};
\node (C) at (1.707,-0.707) [circle, fill,label=right:$c$]{};
\node (D) at (1.707,-1.707) [circle, fill,label=right:$d$]{};
\node (E) at (1,-2.414) [circle, fill,label=below:$e$]{};
\node (F) at (0,-2.414) [circle, fill,label=below:$f$]{};
\node (G) at (-0.707,-1.707) [circle, fill,label=left:$g$]{};
\node (H) at (-0.707,-0.707) [circle, fill,label=left:$h$]{};
\path[>=latex, ->] (A) edge (B); 
\path[>=latex, ->] (B) edge (C); 
\path[>=latex, ->] (C) edge (D); 
\path[>=latex, ->] (D) edge (E); 
\path[>=latex, -] (F) edge (G); 
\path[>=latex, -] (G) edge (H); 
\path[>=latex, ->] (E) edge (A); 
\path[>=latex, -] (C) edge (G); 
\path[>=latex, -] (F) edge (B); 
\path[>=latex, -] (F) edge (E); 
\path[>=latex, ->] (H) edge (A); 
\path[>=latex, ->] (H) edge (D); 
\end{tikzpicture}
}
}\end{frame}
\begin{frame}{
\subfigure[Case 1]{
\begin{tikzpicture}[inner sep =1.5pt]
\node (A) at (0,0) [circle, fill, label=above:$a$]{};
\node (B) at (1,0) [circle, fill,label=above:$b$]{};
\node (C) at (1.707,-0.707) [circle, fill,label=right:$c$]{};
\node (D) at (1.707,-1.707) [circle, fill,label=right:$d$]{};
\node (E) at (1,-2.414) [circle, fill,label=below:$e$]{};
\node (F) at (0,-2.414) [circle, fill,label=below:$f$]{};
\node (G) at (-0.707,-1.707) [circle, fill,label=left:$g$]{};
\node (H) at (-0.707,-0.707) [circle, fill,label=left:$h$]{};
\path[>=latex, ->] (A) edge (B); 
\path[>=latex, ->] (B) edge (C); 
\path[>=latex, ->] (C) edge (D); 
\path[>=latex, ->] (D) edge (E); 
\path[>=latex, <-] (F) edge (G); 
\path[>=latex, <-] (G) edge (H); 
\path[>=latex, <-] (H) edge (A); 
\path[>=latex, ->] (E) edge (A); 
\path[>=latex, ->] (H) edge (D); 
\path[>=latex, <-] (C) edge (G); 
\path[>=latex, ->] (F) edge (B); 
\path[>=latex, ->] (F) edge (E); 
\end{tikzpicture}
}
\subfigure[Case 2]{
\begin{tikzpicture}[inner sep =1.5pt]
\node (A) at (0,0) [circle, fill, label=above:$a$]{};
\node (B) at (1,0) [circle, fill,label=above:$b$]{};
\node (C) at (1.707,-0.707) [circle, fill,label=right:$c$]{};
\node (D) at (1.707,-1.707) [circle, fill,label=right:$d$]{};
\node (E) at (1,-2.414) [circle, fill,label=below:$e$]{};
\node (F) at (0,-2.414) [circle, fill,label=below:$f$]{};
\node (G) at (-0.707,-1.707) [circle, fill,label=left:$g$]{};
\node (H) at (-0.707,-0.707) [circle, fill,label=left:$h$]{};
\path[>=latex, ->] (A) edge (B); 
\path[>=latex, ->] (B) edge (C); 
\path[>=latex, ->] (C) edge (D); 
\path[>=latex, ->] (D) edge (E); 
\path[>=latex, ->] (F) edge (G); 
\path[>=latex, ->] (G) edge (H); 
\path[>=latex, ->] (E) edge (A); 
\path[>=latex, -] (C) edge (G); 
\path[>=latex, <-] (F) edge (B); 
\path[>=latex, <-] (F) edge (E); 
\path[>=latex, ->] (H) edge (A); 
\path[>=latex, ->] (H) edge (D); 
\end{tikzpicture}
}
}\end{frame}
\caption{Orienting edges in $G'$ (in the proof for $V_8$-subdivision), given that the cycle $(a, b, c, d, e, a)$ is dominated. (a) Case 1: Cycle dominated by $f$. (b) Case 2: Cycle dominated by $h$. (c) Case 1 forces this orientation and this has an undominated induced cycle $(a, h, d, e, a)$. (d) Case 2 forces this orientation. No matter how the edge between $c$ and $g$ is oriented, there is an undominated induced cycle.}
\end{figure}
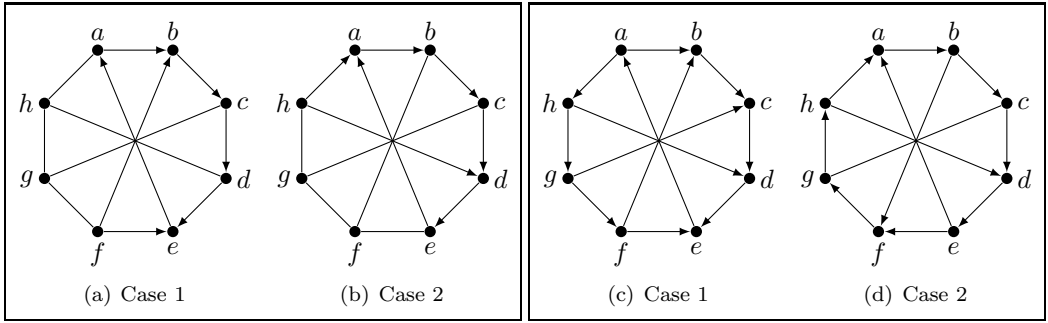
\noindent Case 1: $f$ dominates $(a, b, c, d, e, a)$. That is, $(f, b)$ and $(f, e) \in E'$. 

$\implies$ $(g, f) \in E'$ (otherwise $f$ is a source in $G'$ but $G'$ is strongly connected). 

$\implies$ $(g, c) \in E'$ (otherwise $(c, g, f, b, c)$ is a $4$-cycle, hence undominated, induced). 

$\implies$ $(h, g) \in E'$ (owing to strong-connectivity, again)

$\implies$ $(h, d) \in E'$ (otherwise we have a $4$-cycle, again)

$\implies$ $(a, h) \in E'$ (due to strong connectivity)

$\implies$ $(a, h, d, e, a)$ is a $4$-cycle (an undominated induced cycle). \\

\noindent This contradicts the fact that $G$ is a counter example. Therefore, $f$ cannot dominate $(a, b, c, d, e, a)$. \\

\noindent Case 2: $h$ dominates $(a, b, c, d, e, a)$. That is, $(h, a)$ and $(h, d) \in E'$. 

$\implies$ $(g, h) \in E'$ (since $G'$ is strongly connected). \\

\noindent Now, $(f, g) \in E'$ (because, if $(c, g) \in E'$, $f$ has to dominate the resulting $5$-cycle, and if $(g, c) \in E'$, $(f, g)$ ensures strong-connectivity). Similarly, $(e, f) \in E'$. 

$\implies$ $(b, f) \in E'$ (otherwise $(f, b, c, d, e, f)$ is an undominated induced cycle). \\

\noindent Now, if $(g, c) \in E'$, $(h, d, e, f, g, h)$ is an undominated induced cycle. Otherwise, $(c, g, h, a, b, c)$ is an undominated induced cycle. \\

\noindent Hence, $h$ cannot dominate $(a, b, c, d, e, a)$ either.

$\implies$ $(a, b, c, d, e, a)$ itself is an undominated induced cycle. 

\end{proof}


\section{Conclusion and Future Work}\label{section:futurework}

We have sharpened the result of Addario-Berry, Olver and Vetta \cite{AOV07} by observing that their polynomial time algorithm for planar win-lose bimatrix games is, in fact, in \Log. Further, we have extended their result and shown that the following classes of win-lose bimatrix games (which include planar games as well, and are classes that include non-trivial non-planar games too) are in \NLog\ (and hence in \Ptime). In fact, classes (1) and (2) are in \ULog. 
\begin{enumerate}
\item $K_{3,3}$ minor-free games, 
\item $K_5$-minor free games where the triconnected components are either planar or $V_8$, 
\item Games whose triconnected components are $K_5$, $V_8$ or planar. 
\end{enumerate}

As win-lose bimatrix games are PPAD hard and as we have proved polynomial time tractability of the above classes, finding or narrowing down on what actually causes the hardness is an open problem. It will be interesting to prove a necessary and sufficient condition for hardness of win-lose games. 

\section*{Acknowledgement}
The authors wish to thank T. Parthasarathy, K. V. Subrahmanyam, G. Ravindran, Prajakta Nimbhorkar and Tanmoy Chakraborty for useful discussions. The first author would also like to thank the organizers and participants of the Aarhus-Chennai Computational Complexity Workshop 2010, for their feedback on a talk he had given, on the topic. 




\newpage\appendix

\section{Proof of Claim 4.}\label{appendix:claim4}

\begin{figure}
\subfigure[]{
\begin{tikzpicture}[inner sep =1.5pt]
\node (A) at (0,0) [circle, fill, label=above:$i$]{};
\node (B) at (1,0) [circle, fill,label=above:$i+1$]{};
\node (C) at (1.707,-0.707) [circle, fill,label=right:$i+2$]{};
\node (D) at (1.707,-1.707) [circle, fill,label=right:$i+3$]{};
\node (E) at (1,-2.414) [circle, fill,label=below:$i+4$]{};
\node (F) at (0,-2.414) [circle, fill,label=below:$i+5$]{};
\node (G) at (-0.707,-1.707) [circle, fill,label=left:$i+6$]{};
\node (H) at (-0.707,-0.707) [circle, fill,label=left:$i+7$]{};
\path[>=latex, ->] (A) edge (B); 
\path[>=latex, -] (B) edge (C); 
\path[>=latex, -] (C) edge (D); 
\path[>=latex, -] (D) edge (E); 
\path[>=latex, -] (F) edge (G); 
\path[>=latex, -] (G) edge (H); 
\path[>=latex, -] (H) edge (A); 
\path[>=latex, ->] (E) edge (A); 
\path[>=latex, -] (H) edge (D); 
\path[>=latex, -] (C) edge (G); 
\path[>=latex, <-] (F) edge (B); 
\path[>=latex, ->] (F) edge (E); 
\end{tikzpicture}
}
\subfigure[]{
\begin{tikzpicture}[inner sep =1.5pt]
\node (A) at (0,0) [circle, fill, label=above:$i$]{};
\node (B) at (1,0) [circle, fill,label=above:$i+1$]{};
\node (C) at (1.707,-0.707) [circle, fill,label=right:$i+2$]{};
\node (D) at (1.707,-1.707) [circle, fill,label=right:$i+3$]{};
\node (E) at (1,-2.414) [circle, fill,label=below:$i+4$]{};
\node (F) at (0,-2.414) [circle, fill,label=below:$i+5$]{};
\node (G) at (-0.707,-1.707) [circle, fill,label=left:$i+6$]{};
\node (H) at (-0.707,-0.707) [circle, fill,label=left:$i+7$]{};
\path[>=latex, ->] (A) edge (B); 
\path[>=latex, ->] (B) edge (C); 
\path[>=latex, ->] (C) edge (D); 
\path[>=latex, ->] (D) edge (E); 
\path[>=latex, -] (F) edge (G); 
\path[>=latex, -] (G) edge (H); 
\path[>=latex, -] (H) edge (A); 
\path[>=latex, ->] (E) edge (A); 
\path[>=latex, -] (H) edge (D); 
\path[>=latex, -] (C) edge (G); 
\path[>=latex, -] (F) edge (B); 
\path[>=latex, -] (F) edge (E); 
\end{tikzpicture}
}
\subfigure[]{
\begin{tikzpicture}[inner sep =1.5pt]
\node (A) at (0,0) [circle, fill, label=above:$i$]{};
\node (B) at (1,0) [circle, fill,label=above:$i+1$]{};
\node (C) at (1.707,-0.707) [circle, fill,label=right:$i+2$]{};
\node (D) at (1.707,-1.707) [circle, fill,label=right:$i+3$]{};
\node (E) at (1,-2.414) [circle, fill,label=below:$i+4$]{};
\node (F) at (0,-2.414) [circle, fill,label=below:$i+5$]{};
\node (G) at (-0.707,-1.707) [circle, fill,label=left:$i+6$]{};
\node (H) at (-0.707,-0.707) [circle, fill,label=left:$i+7$]{};
\path[>=latex, ->] (A) edge (B); 
\path[>=latex, ->] (B) edge (C); 
\path[>=latex, -] (C) edge (D); 
\path[>=latex, -] (D) edge (E); 
\path[>=latex, <-] (F) edge (G); 
\path[>=latex, -] (G) edge (H); 
\path[>=latex, -] (H) edge (A); 
\path[>=latex, ->] (E) edge (A); 
\path[>=latex, -] (H) edge (D); 
\path[>=latex, ->] (C) edge (G); 
\path[>=latex, -] (F) edge (B); 
\path[>=latex, ->] (F) edge (E); 
\end{tikzpicture}
}

\subfigure[]{
\begin{tikzpicture}[inner sep =1.5pt]
\node (A) at (0,0) [circle, fill, label=above:$i$]{};
\node (B) at (1,0) [circle, fill,label=above:$i+1$]{};
\node (C) at (1.707,-0.707) [circle, fill,label=right:$i+2$]{};
\node (D) at (1.707,-1.707) [circle, fill,label=right:$i+3$]{};
\node (E) at (1,-2.414) [circle, fill,label=below:$i+4$]{};
\node (F) at (0,-2.414) [circle, fill,label=below:$i+5$]{};
\node (G) at (-0.707,-1.707) [circle, fill,label=left:$i+6$]{};
\node (H) at (-0.707,-0.707) [circle, fill,label=left:$i+7$]{};
\path[>=latex, ->] (A) edge (B); 
\path[>=latex, ->] (B) edge (C); 
\path[>=latex, -] (C) edge (D); 
\path[>=latex, ->] (D) edge (E); 
\path[>=latex, -] (F) edge (G); 
\path[>=latex, ->] (G) edge (H); 
\path[>=latex, -] (H) edge (A); 
\path[>=latex, ->] (E) edge (A); 
\path[>=latex, ->] (H) edge (D); 
\path[>=latex, ->] (C) edge (G); 
\path[>=latex, -] (F) edge (B); 
\path[>=latex, -] (F) edge (E); 
\end{tikzpicture}
}
\subfigure[]{
\begin{tikzpicture}[inner sep =1.5pt]
\node (A) at (0,0) [circle, fill, label=above:$i$]{};
\node (B) at (1,0) [circle, fill,label=above:$i+1$]{};
\node (C) at (1.707,-0.707) [circle, fill,label=right:$i+2$]{};
\node (D) at (1.707,-1.707) [circle, fill,label=right:$i+3$]{};
\node (E) at (1,-2.414) [circle, fill,label=below:$i+4$]{};
\node (F) at (0,-2.414) [circle, fill,label=below:$i+5$]{};
\node (G) at (-0.707,-1.707) [circle, fill,label=left:$i+6$]{};
\node (H) at (-0.707,-0.707) [circle, fill,label=left:$i+7$]{};
\path[>=latex, ->] (A) edge (B); 
\path[>=latex, ->] (B) edge (C); 
\path[>=latex, ->] (C) edge (D); 
\path[>=latex, ->] (D) edge (E); 
\path[>=latex, ->] (F) edge (G); 
\path[>=latex, ->] (G) edge (H); 
\path[>=latex, ->] (H) edge (A); 
\path[>=latex, -] (E) edge (A); 
\path[>=latex, -] (H) edge (D); 
\path[>=latex, -] (C) edge (G); 
\path[>=latex, -] (F) edge (B); 
\path[>=latex, <-] (F) edge (E); 
\end{tikzpicture}
}
\subfigure[]{
\begin{tikzpicture}[inner sep =1.5pt]
\node (A) at (0,0) [circle, fill, label=above:$i$]{};
\node (B) at (1,0) [circle, fill,label=above:$i+1$]{};
\node (C) at (1.707,-0.707) [circle, fill,label=right:$i+2$]{};
\node (D) at (1.707,-1.707) [circle, fill,label=right:$i+3$]{};
\node (H) at (1,-2.414) [circle, fill,label=below:$i+7$]{};
\node (G) at (0,-2.414) [circle, fill,label=below:$i+6$]{};
\node (F) at (-0.707,-1.707) [circle, fill,label=left:$i+5$]{};
\node (E) at (-0.707,-0.707) [circle, fill,label=left:$i+4$]{};
\path[>=latex, ->] (A) edge (B); 
\path[>=latex, ->] (B) edge (C); 
\path[>=latex, ->] (C) edge (D); 
\path[>=latex, -] (D) edge (E); 
\path[>=latex, <-] (F) edge (G); 
\path[>=latex, <-] (G) edge (H); 
\path[>=latex, -] (H) edge (A); 
\path[>=latex, ->] (E) edge (A); 
\path[>=latex, <-] (H) edge (D); 
\path[>=latex, -] (C) edge (G); 
\path[>=latex, -] (F) edge (B); 
\path[>=latex, ->] (F) edge (E); 
\end{tikzpicture}
}

\caption{(a) 4-cycles look like this. (b) 5-cycles look like this. (c) 6-cycles look like this. (d) 7-cycles look like this. (e) One ``type'' of 8-cycles. (f) The other non-isomorphic ``type'' of 8-cycles.}
\end{figure}
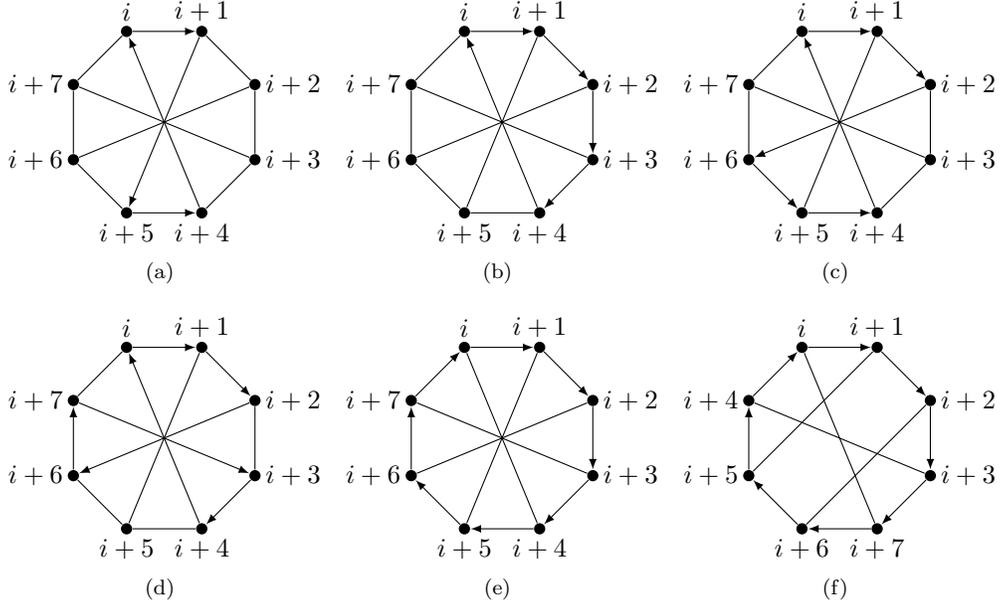


We discuss an informal outline of the proof of Claim 4 (in Subsection~\ref{subsection:V8proof}). 

\begin{proof}
It is easy to see that: \\

{\noindent} Subclaim (I): Given any vertex $j$ of $G'$, vertices adjacent to $j$ are $j-1, j+1$ and $j+4$ (all operations are modulo $8$). \\

{\noindent} (1) The claim that all cycles in $G'$ (and hence in $G$) are of length $> 3$, is easy to see. \\

{\noindent} (2) $4$-cycles in $G'$ are of the form $(i, i+1, i+5, i+4, i)$ (mod $8$, again). Refer Figure 3(a). (We are leaving out the symmetric cases $(i, i-1, i+3, i+4, i)$). 
Such cycles are, clearly, induced. Moreover, all vertices not on this cycle have only one edge incident on this cycle (because of Subclaim (I)). \\

{\noindent} (3) $5$-cycles in $G'$ are of the form $(i, i+1, i+2, i+3, i+4, i)$ (mod $8$). Refer Figure 3(b). Again, using Subclaim (I), there are no other edges between vertices on this cycle.  \\

{\noindent} (4) $6$-cycles in $G'$ are of the form $(i, i+1, i+2, i+6, i+5, i+4, i)$. Refer Figure 3(c). Here, if $(i+1, i+5) \in E'$, then $(i, i+1, i+5, i+4, i)$ is a 4-cycle. 
Otherwise, $(i+5, i+1) \in E'$ and $(i+1, i+2, i+6, i+5, i+1)$ is a 4-cycle. \\

{\noindent} (5) $7$-cycles in $G'$ are of the form $(i, i+1, i+2, i+6, i+7, i+3, i+4, i)$. (We have left out the symmetric and isomorphic cases). Refer Figure 3(d). 
Now, if $(i+7, i)$ is an edge, then $(i+7, i, i+1, i+2, i+6, i+7)$ is a $5$-cycle. Otherwise, $(i, i+7)$ is an edge and $(i, i+7, i+3, i+4, i)$ is a $4$-cycle. \\

{\noindent} (6) $8$-cycles in $G'$ are either of the form $(i, i+1, i+2, i+3, i+4, i+5, i+6, i+7, i)$ or $(i, i+1, i+2, i+3, i+7, i+6, i+5, i+4, i)$. (Again, we have left out the symmetric and isomorphic cases). Refer Figures 3(e) and 3(f). \\

In the first case, $i$ and $i+4$ (for example) are ``opposite'' vertices. If $(i, i+4)$ is an edge, then $(i, i+4, i+5, i+6, i+7, i)$ is a $5$-cycle. Instead, if $(i+4, i)$ is an edge, then $(i+4, i, i+1, i+2, i+3, i+4)$ is a $5$-cycle. \\

In the second case, $i$ and $i+7$ are ``opposite'' vertices which, again, lead to a 5-cycle depending on the direction of the edge $\{i, i+7\}$. 

\end{proof}


\end{document}